   \def\@corref#1{}
   \def\cnotenum#1{}
   \def\corref#1{}
   \def\ead#1{}
\newtheorem{theorem}{Theorem}[section]
\newtheorem{remark}[theorem]{Remark}
\newcommand{\orcid}[1]{%
  \textsuperscript{\href{https://orcid.org/#1}{\raisebox{0pt}{\includegraphics[height=1em]{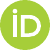}}}}%
}
\begin{document}

\begin{frontmatter}

  \title{Compounded Linear Failure Rate Distribution: Properties, Simulation and Analysis}

  \author[spj]{Suchismita Das\corref{cor1}\orcid{0000-0002-9380-6235}}
  \ead{suchismita.das@spjain.org}
  \author[spj]{Akul Ameya\orcid{0009-0003-5050-7934}}
  \author[spj]{Cahyani Karunia Putri\orcid{0009-0002-1015-7186}}

  \cortext[cor1]{Corresponding author.}
  \address[spj]{S P Jain School of Global Management, 15 Carter Street, Sydney NSW 2141, Australia}

  \begin{abstract}
    This paper proposes a new extension of the linear failure rate (LFR) model to better capture real-world lifetime data. The model incorporates an additional shape parameter to increase flexibility. It helps model the minimum survival time from a set of LFR distributed variables. We define the model, derive certain statistical properties such as the mean residual life, the mean inactivity time, moments, quantile, order statistics and also discuss the results on stochastic orders of the proposed distribution. The proposed model has increasing, bathtub shaped and inverse bathtub shaped hazard rate function. We use the method of maximum likelihood estimation to estimate the unknown parameters. We conduct simulation studies to examine the behavior of the estimators. We also use three real datasets to evaluate the model, which turns out superior compared to classical alternatives.
  \end{abstract}

  \begin{keyword}
    Linear failure rate distribution \sep Lifetime distribution \sep Maximum likelihood estimation \sep Order statistics \sep Stochastic orders
  \end{keyword}

\end{frontmatter}

\onehalfspacing

\section{Introduction}\label{sec_intro}

In reliability engineering, complex systems often experience multiple, independent sources of stress that contribute to their overall degradation and eventual failure.
Modeling such systems requires a probabilistic framework capable of representing how individual stress factors evolve over time and interact to determine the system's lifetime.
Traditionally, lifetime analysis has employed classical probability distributions such as the exponential, Rayleigh and Weibull distributions along with their various generalizations.
However, the growing need to represent more intricate failure behavior has driven significant advancement in the generalization of lifetime distributions, leading to their widespread
applications across diverse fields including engineering, finance, economics and biomedical science \cite{Sen1995, Gupta1999, Mudholkar1993, Ahmad2020}.

Within this context, the Linear Failure Rate Distribution (LFRD) \cite{Bain1974}, also known as the Linear Exponential Distribution (LED) \cite{Sen1995}, has emerged as a pivotal
framework due to its capacity to model both constant and increasing failure rates through a unified structure. The cumulative distribution function (CDF) of the LFR distribution is defined as

\begin{equation*}
  G_0(x) = 1 - \exp\left(-\alpha x - \frac{\beta}{2} x^2\right), \quad x \geq 0, \quad \alpha, \beta >0,
\end{equation*}

where $\alpha$ and $\beta$ are scale parameters. The corresponding hazard function is given by

\begin{equation*}
  h_0(x) = \alpha + \beta x, \quad x \geq 0, \quad \alpha, \beta >0.
\end{equation*}

The quadratic exponential structure enables the LFRD to represent systems with progressive
wear-out mechanisms. Notably, it generalized two classical distributions: when $\beta = 0$,
it reduces to the Exponential Distribution (ED), and when $\alpha = 0$, it reduces to the Rayleigh Distribution (RD).
Despite its versatility, real-world applications often exhibit failure patterns beyond the LFRD's capacity, such as early-life failures or bathtub-shaped hazards. Recent methodological innovations have expanded the model's flexibility through several grouped strategies.

Direct parameter extensions include the Generalized LFRD and Generalized LED \cite{Sarhan2009,Mahmoud2010}, primarily introduced to increase shape flexibility. Generator-based transformations cover the Beta LFRD, Beta GLED, Kumaraswamy GLFRD, Kumaraswamy LED, and McDonald GLFRD \cite{Jafari2015,Shakhatreh2016,Elbatal2013a,Merovci2015,Elbatal2014}, which enhance distributional skewness and tail behaviour. Exponentiation and transmutation methods such as the Exponentiated GLED, Transmuted GLED, Transmuted LED, Extended GLFRD, and Exponentiated Kumaraswamy LED \cite{Sarhan2013,Elbatal2013b,Ghosh2021,Tian2014b,Kazemi2016,Nasiru2018} aim to introduce additional hazard-rate shapes. Composition and compounding approaches, including the Extended LFRD, Marshall--Olkin GLED, Poisson LFRD, and Geometric GLFRD \cite{Ghitany2007,Okasha2016,Cordeiro2015,Nadarajah2012}, are designed to model heterogeneity and unobserved mixing. Further modified developments, such as Modified GLFRD variants, new GLED formulations, Odd Generalized Exponential LFRD, Inverted GLED, Truncated Cauchy Power--G LFRD, and Modified GLED \cite{Jamkhaneh2014,Khan2016,Tian2014a,Poonia2021,Mustafa2016,Mahmoud2017,Alsadat2023,Mahmoud2023}, focus on improving tail control and hazard adaptability.

In this paper, we introduce a new three parameter lifetime distribution as a new modified form of the LFR model. Chakraborty et al. \cite{Chakraborty2025} proposed a Poisson compounded minimum lifetime model using a heavy tailed Burr distribution as the baseline for complex network applications.
However, such models are less suitable for reliability data, where aging effects and hazard rate behavior play an important role.
Motivated by this, we propose a compounded linear failure rate distribution to model survival data. The physical rationale underlying the proposed model is described below with a motivating example.

Suppose an aircraft wing spar is subjected to $N > 0$ independent stress sources like turbulence, engine vibration, and thermal stress. Each of these sources can independently cause microcracks to grow and ultimately lead to component failure. Let $X_1, X_2, \ldots, X_N$ represent the potential times to failure associated with $N$ latent shocks. The number of latent shocks beyond the first is assumed to follow a Poisson distribution with mean $\lambda > 0$, which ensures that at least one shock is always present, i.e, $N - 1 \sim \text{Poisson}(\lambda)$. Each $X_i$ is an independent and identical LFRD variable since the chance of failure initially is small, and characterised by the parameter $\alpha$, which represents the inherent failure rate when the system is new. However, as time progresses and stress exposure increases, the failure rate rises almost linearly over time $(\beta x)$. Accordingly, the actual failure time of the wing spar can be modelled as
$U = \min\{X_1, X_2, \ldots, X_N\}$, implying that the component fails when any one of the independent stress mechanisms reaches its critical failure threshold.

Given this framework, a probabilistic model of this ‘first failure’ approach is recommended to assess component performance under multiple stress factors. By combining a Poisson-distributed number of risks with LFR-distributed latent times, this study proposes the Compounded LFRD (CLFRD) model, which captures variability in failure mechanisms, offering a flexible representation of real-world lifetime data. Consequently, the observed lifetime, given by $U = \min\{X_1, X_2, \ldots, X_N\}$, follows a new family of distributions, reflecting the first failure mechanism in such a system. This model enables accurate lifetime prediction and the development of optimised maintenance and inspection schedules that reflect the underlying failure probabilities.

The rest of the paper is organized as follows. We discuss our new proposed CLFRD model in \autoref{sec_model}, followed by its properties in \autoref{sec_prop}. \autoref{sec_est} deals with parameter estimation, while \autoref{sec_sim} deals with simulation. \autoref{sec_data} showcases empirical performance and finally, \autoref{sec_con} presents some concluding remarks.

\section{The Compounded Linear Failure Rate Distribution}\label{sec_model}

Let $X_1, X_2, \ldots, X_N$ be a random sample from the LFRD and $N - 1 \sim \text{Poisson}(\lambda)$, where $N \in \mathbb{N}$ and we define $U = \min\{X_1, X_2, \ldots, X_N\}$.
The survival function for $U$ is given by:
\begin{align}
  \bar{G}(x) & = P(U \geq x) = \sum_{n=1}^{\infty} P(U \geq x \mid N-1=n-1) P(N-1=n-1) \nonumber                                                          \\
             & = \sum_{n=1}^{\infty} \left[\exp\left(-\alpha x - \frac{\beta}{2}x^2\right)\right]^n \frac{\exp(-\lambda) \lambda^{n-1}}{(n-1)!} \nonumber \\
             & = \exp\left[-\alpha x - \frac{\beta}{2}x^2 - \lambda + \lambda \exp\left(-\alpha x - \frac{\beta}{2}x^2\right)\right], \nonumber           \\
             & \hspace{2em} x \ge 0,\quad \alpha,\beta,\lambda>0 .
\end{align}

The cumulative distribution function (CDF) is therefore given by:
\begin{align}
  G(x) & = 1 - \exp\left[-\alpha x - \frac{\beta}{2} x^2 - \lambda + \lambda \exp\left(-\alpha x - \frac{\beta}{2} x^2\right)\right],\nonumber \\
       & \hspace{2em} x \ge 0,\quad \alpha,\beta,\lambda>0 .
\end{align}
The corresponding probability density function (PDF) of CLFRD$(\alpha, \beta, \lambda)$ is given as:
\begin{equation}\label{eq1}
  g(x) = \frac{(\alpha + \beta x)\left[1 + \lambda \exp\left(-\alpha x - \frac{\beta}{2} x^2\right)\right]}{\exp\left[\alpha x + \frac{\beta}{2} x^2 + \lambda - \lambda \exp\left(-\alpha x - \frac{\beta}{2} x^2\right)\right]}, \quad x \geq 0, \quad \alpha, \beta, \lambda > 0.
\end{equation}

The following result shows certain conditions under which the PDF is unimodal and is decreasing.

\begin{theorem}
  Let $U$ be a random variable having CLFRD$(\alpha, \beta, \lambda)$. The PDF of $U$ is:
  \begin{itemize}
    \item[(i)] Unimodal (with mode at zero) when \( \alpha^2 < \frac{\beta(1+\lambda)}{\lambda+(1+\lambda)^2} \).
    \item[(ii)] Decreasing when \( \alpha^2 \geq \frac{\beta(1+\lambda)}{\lambda+(1+\lambda)^2} \).
  \end{itemize}
\end{theorem}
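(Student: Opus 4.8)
The plan is to reduce everything to the sign of $g'(x)$ on $[0,\infty)$, since the monotonicity and modality of $g$ are dictated entirely by where $g'$ is positive and where it is negative. Writing $m(x)=\alpha+\beta x$, $A(x)=\alpha x+\tfrac{\beta}{2}x^2$ and $E(x)=\exp(-A(x))$, I would first record the density in the factored form $g=m\,(1+\lambda E)\,F$ with $F=\exp(-A-\lambda+\lambda E)>0$, together with the elementary derivatives $m'=\beta$, $E'=-mE$ and $F'=-m(1+\lambda E)F$. Differentiating this product and pulling out the strictly positive factor $F$ should give $g'(x)=F(x)\,Q(x)$, where
\[
Q(x)=\beta(1+\lambda E)-\lambda m^2 E-m^2(1+\lambda E)^2 .
\]
Because $F>0$, the sign of $g'$ coincides everywhere with the sign of $Q$, so the theorem becomes a sign analysis of the single function $Q$.

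Next I would pin down the endpoints. At $x=0$ we have $E=1$ and $m=\alpha$, so $Q(0)=\beta(1+\lambda)-\alpha^2[\lambda+(1+\lambda)^2]$; a direct rearrangement shows $Q(0)>0$ exactly when $\alpha^2<\frac{\beta(1+\lambda)}{\lambda+(1+\lambda)^2}$ and $Q(0)\le 0$ in the complementary range, so the stated threshold is precisely the value of $\alpha^2$ at which $Q(0)=0$. As $x\to\infty$ we have $m\to\infty$ and $E\to 0$, so $Q(x)\to-\infty$ with leading behaviour $\beta-m^2$. Thus $Q$ starts with the sign determined by the threshold and is eventually negative.

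The crux is to show that $Q$ changes sign at most once, and only from positive to negative, i.e. a single-downcrossing property. The plan is to prove that at any $x_0$ with $Q(x_0)=0$ one necessarily has $Q'(x_0)<0$. Writing $p=1+\lambda E$, I would compute $Q'$ from the derivatives above and then use the constraint $Q(x_0)=0$ to eliminate $m^2$ via $m^2=\frac{\beta p}{\lambda E+p^2}$. After clearing the positive denominator, I expect $Q'(x_0)$ to reduce to $\frac{\beta m}{\lambda E+p^2}$ times the quantity $-\bigl(3\lambda^2 E^2+\lambda E p(3p-1)+2p^4\bigr)$; since $p=1+\lambda E\ge 1$ forces $3p-1\ge 2>0$, every term in the parenthesis is positive, so $Q'(x_0)<0$. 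The hard part will be exactly this algebraic reduction: the raw expression for $Q'$ is messy, and the whole argument hinges on the substitution for $m^2$ collapsing it into a manifestly signed polynomial in $E$ and $p$.

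Finally, the single-downcrossing property closes both cases. Because $Q$ is continuous with $Q(\infty)<0$, it cannot rise back to zero after a downward crossing — an upward crossing would require $Q'\ge 0$ at a zero — so $Q$ has at most one root. In case (ii), $Q(0)\le 0$ then forces $Q\le 0$ throughout, hence $g'\le 0$ and $g$ is monotonically decreasing, with its mode at zero. In case (i), $Q(0)>0$ together with $Q(\infty)<0$ yields a unique root $x^\ast$ with $Q>0$ on $[0,x^\ast)$ and $Q<0$ on $(x^\ast,\infty)$, so $g$ first increases and then decreases and is therefore unimodal, completing the proof.
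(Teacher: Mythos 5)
Your proposal is correct, and it takes a genuinely different---and in fact more rigorous---route than the paper. The paper substitutes $y=\alpha x+\tfrac{\beta}{2}x^{2}$ (so that $m^{2}=\alpha^{2}+2\beta y$), writes $g'(y)=\theta(y)\exp\left(-y-\lambda(1-e^{-y})\right)$ with $\theta(y)=\beta(1+\lambda e^{-y})-(\alpha^{2}+2\beta y)\left[\lambda e^{-y}+(1+\lambda e^{-y})^{2}\right]$, and then disposes of the crossing count with the unproved assertion that ``it can be shown that $g'(y)$ is decreasing in $y$.'' That assertion is false as stated: $g'(y)$ is eventually negative yet tends to $0$ as $y\to\infty$, so it cannot be globally decreasing, and even $\theta$ need not be monotone (e.g.\ $\beta=\lambda=1$, $\alpha^{2}=100$ gives $\theta'(0)=-1-10+500=489>0$). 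What the theorem actually needs is exactly your single-downcrossing lemma, and your algebra checks out: with $p=1+\lambda E$, using $E'=-mE$ and $p'=-\lambda mE$, one gets $Q'=m\bigl[-3\beta\lambda E+\lambda m^{2}E(1+2p)-2\beta p^{2}\bigr]$, and eliminating $m^{2}=\beta p/(\lambda E+p^{2})$ at a zero of $Q$ yields
\[
Q'(x_{0})=-\frac{\beta m}{\lambda E+p^{2}}\Bigl[3\lambda^{2}E^{2}+\lambda E p(3p-1)+2p^{4}\Bigr]<0,
\]
since $p>1$ forces $3p-1>0$. Combined with $Q(0)=\beta(1+\lambda)-\alpha^{2}\left[\lambda+(1+\lambda)^{2}\right]$ (the same endpoint value the paper computes as $\theta(0)$) and the bound $Q(x)\le\beta(1+\lambda)-m(x)^{2}\to-\infty$, your sup-of-the-nonpositive-set argument rules out upcrossings and delivers both cases, including the boundary case $Q(0)=0$, where $Q'(0)<0$ pushes $Q$ immediately negative. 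So your proof not only reaches the same conclusion but supplies the justification the paper hand-waves; the paper's shortcut (a linear-in-$y$ replacement for $m^{2}$ plus claimed monotonicity of $g'$) buys brevity at the cost of a gap, while your lemma buys a complete argument at the cost of one messy but mechanical reduction. One cosmetic remark: the parenthetical ``with mode at zero'' in item (i) of the statement is evidently misplaced---in case (i) the mode is the interior point $x^{\ast}$, and the mode sits at zero precisely in the decreasing case (ii), which is how you read it.
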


\begin{proof}
  Differentiating $g(x)$ with respect to $x$ yields:
  \begin{multline*}
    g'(x) = \Biggl[\beta\left(1+\lambda \exp\left(-\alpha x - \frac{\beta}{2} x^2\right)\right) - \lambda(\alpha + \beta x)^2 \exp\left(-\alpha x - \frac{\beta}{2} x^2\right) \\
      - (\alpha + \beta x)^2 (1 + \lambda \exp\left(-\alpha x - \frac{\beta}{2} x^2\right))^2\Biggr] \exp\Biggl(-\alpha x - \frac{\beta}{2} x^2 - \lambda + \\ \lambda \exp\left(-\alpha x - \frac{\beta}{2} x^2\right)\Biggr).
  \end{multline*}
  Substituting \( y = \alpha x + \frac{\beta}{2} x^2 \), we get
  \begin{equation*}
    g'(y) = \theta(y) \exp(-y - \lambda(1-\exp(-y))),
  \end{equation*}
  where
  \begin{equation*}
    \theta(y) = \beta [1 + \lambda \exp(-y)] - (\alpha^2 + 2\beta y) [\lambda \exp(-y) + (1 + \lambda \exp(-y))^2 ].
  \end{equation*}

  Now, $\theta(0) = \beta(1 + \lambda) - \alpha^2(\lambda + (1 + \lambda)^2),$
  or, equivalently, $g'(0) = \beta(1 + \lambda) - \alpha^2(\lambda + (1 + \lambda)^2)$
  and it can be shown that $g'(y)$ is decreasing in $y$.
  Therefore, $g'(y)$ has no zeros or one zero if \( \alpha^2 \geq \frac{\beta(1 + \lambda)}{\lambda + (1 + \lambda)^2} \) or \( \alpha^2 < \frac{\beta(1 + \lambda)}{\lambda + (1 + \lambda)^2} \) on $(0, \infty)$.
  Furthermore, it follows that \( g(0) = \alpha(1 + \lambda) \) and $g(\infty) = 0$ and $g(x)$ is non-negative.
  Therefore, $g(x)$ is unimodal if \( \alpha^2 < \frac{\beta(1 + \lambda)}{\lambda + (1 + \lambda)^2} \), and decreasing if \( \alpha^2 > \frac{\beta(1 + \lambda)}{\lambda + (1 + \lambda)^2} \) for all \( x \geq 0 \).
\end{proof}

\autoref{fig_pdf} showcases the density function plots for the 2 given conditions.

\begin{figure}[htbp]
  \centering
  \begin{minipage}{0.48\textwidth}
    \centering
    \includegraphics[width=0.9\linewidth]{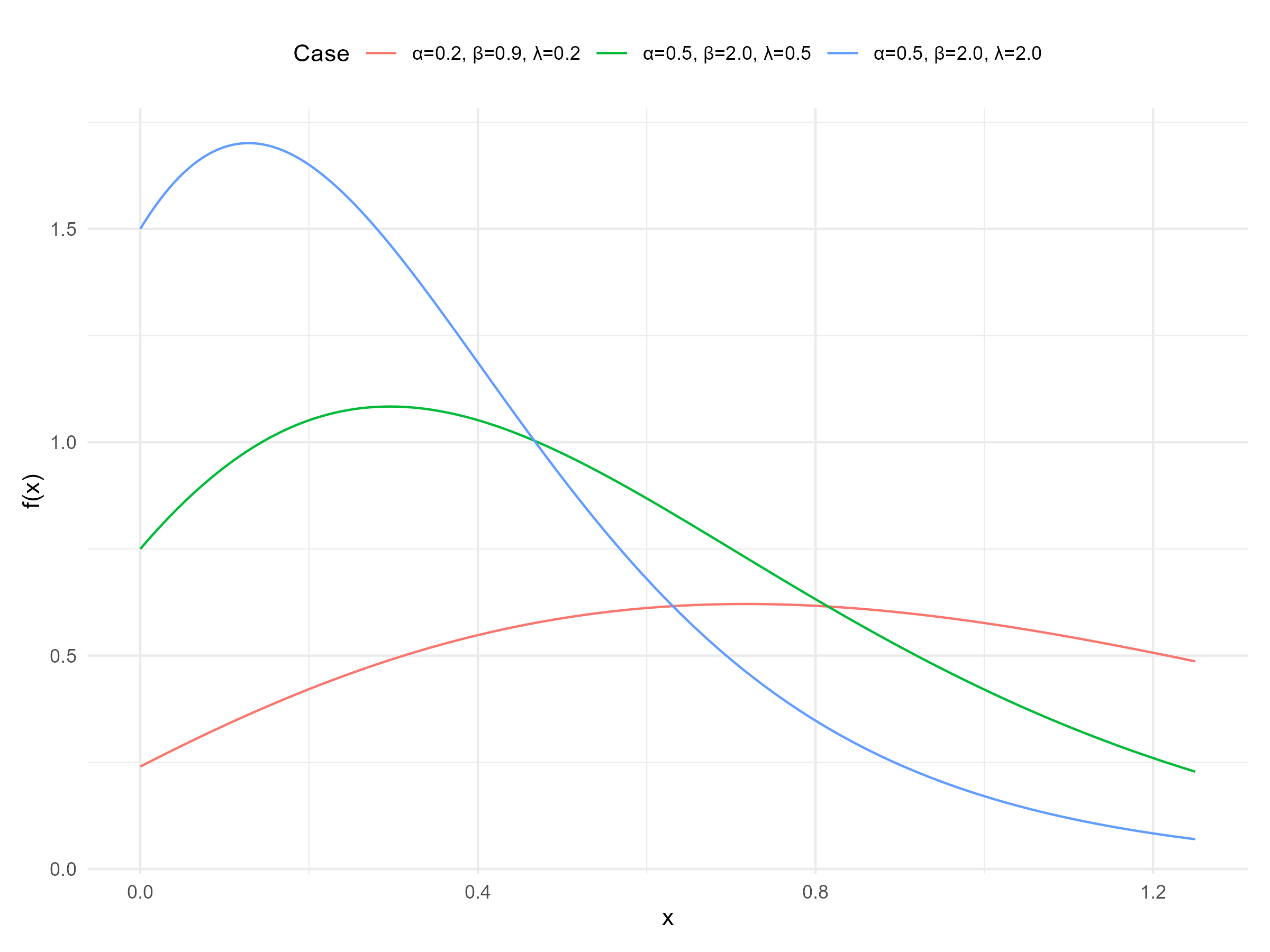}
    {\vspace{0.5ex}\centering\scriptsize unimodal\par}
  \end{minipage}%
  \hspace{0.04\textwidth}%
  \begin{minipage}{0.48\textwidth}
    \centering
    \includegraphics[width=0.9\linewidth]{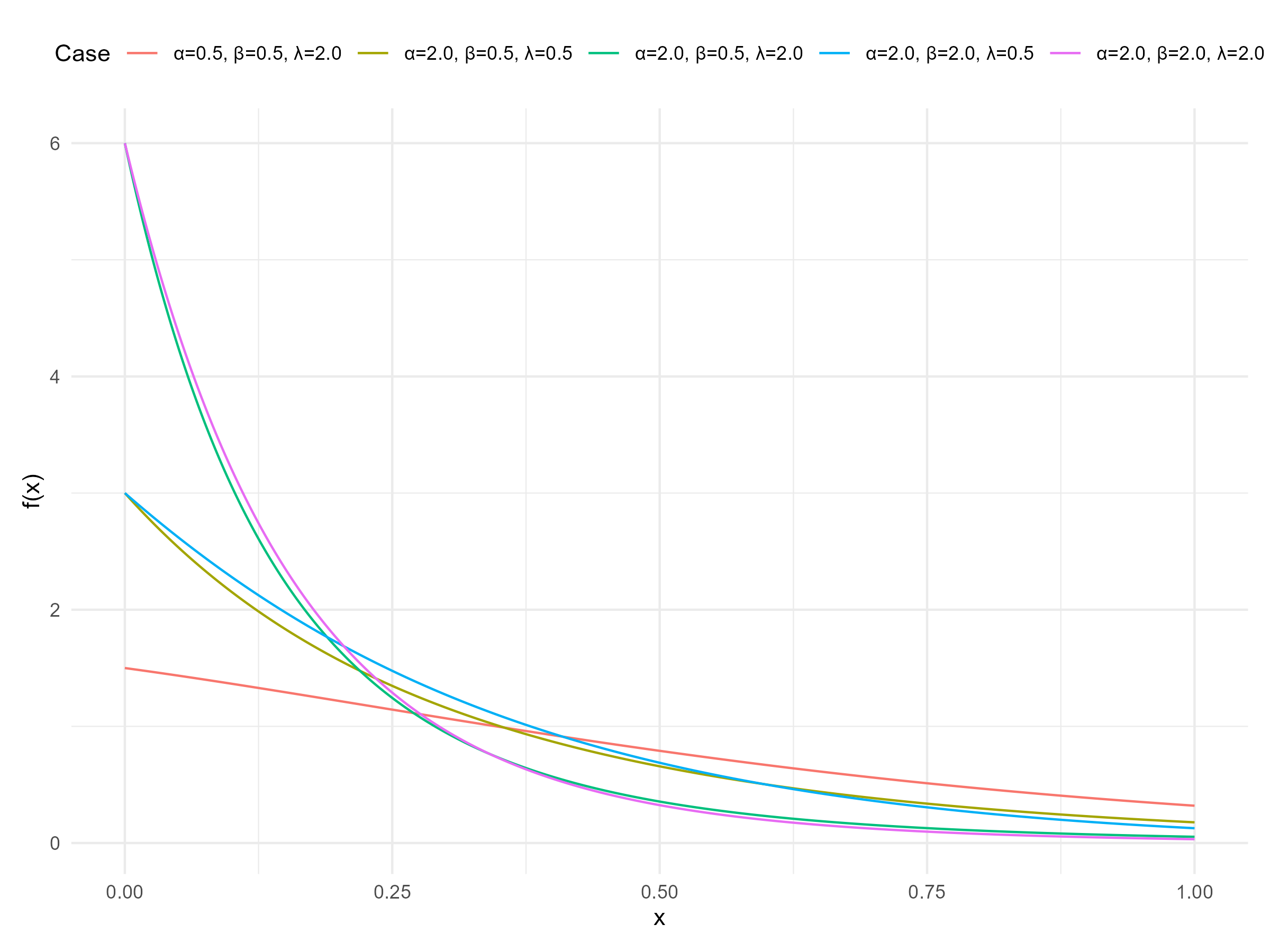}
    {\vspace{0.5ex}\centering\scriptsize increasing\par}
  \end{minipage}
  \caption{Plot corresponding to CLFRD Density Functions for different values of parameters}
  \label{fig_pdf}
\end{figure}

\FloatBarrier

The hazard rate and reversed hazard rate functions are respectfully given for $U \sim \text{CLFRD}(\alpha,\beta,\lambda)$ by:
\begin{equation}
  h(x) = \frac{g(x)}{\bar{G}(x)} = (\alpha + \beta x)\left[1 + \lambda \exp\left(-\alpha x - \frac{\beta}{2} x^2\right)\right], \quad x \geq 0
\end{equation}
and
\begin{equation}
  r(x) = \frac{g(x)}{G(x)} = \frac{(\alpha + \beta x)\left[1 + \lambda \exp\left(-\alpha x - \frac{\beta}{2} x^2\right)\right]}{\exp\left[\alpha x + \frac{\beta}{2} x^2 + \lambda - \lambda \exp\left(-\alpha x - \frac{\beta}{2} x^2\right)\right] - 1}, \quad x \geq 0.
\end{equation}

The following result gives the conditions on the parameters $\alpha, \beta$ and $\lambda$ that describe the possible shapes of $h(x)$ of CLFRD$(\alpha, \beta, \lambda)$.

\begin{theorem}
  Let $U \sim \text{CLFRD}(\alpha, \beta, \lambda)$. Then the hazard rate function $h(x)$ exhibits the following shapes under the specified conditions:

  \begin{itemize}
    \item[(i)] $h(x)$ is monotonically increasing if
          \begin{itemize}
            \item[(a)] \( \alpha^2 < 3\beta \) and \( 0 < \log(2 \lambda) \leq \frac{3 \beta - \alpha^2}{2 \beta} \), or
            \item[(b)] \( \alpha^2 \geq 3\beta \) and \( \beta (1 + \lambda) \geq \lambda \alpha^2 \);
          \end{itemize}

    \item[(ii)] $h(x)$ exhibits a bathtub shape, i.e., decreasing\textendash\hspace{0pt}increasing if
          \begin{itemize}
            \item[(a)] \( \alpha^2 < 3\beta \), \( \beta (1 + \lambda) \leq \lambda \alpha^2 \), and \( \log(2 \lambda) > \frac{3 \beta - \alpha^2}{2\beta} \), or
            \item[(b)] \( \alpha^2 \geq 3\beta \) and \( \beta (1 + \lambda) \leq \lambda \alpha^2 \);
          \end{itemize}

    \item[(iii)] $h(x)$ exhibits an inverse bathtub shape, i.e., increasing\textendash\hspace{0pt}decreasing\textendash\hspace{0pt}increasing if
          \begin{itemize}
            \item[(a)] \( \alpha^2 < 3\beta \) and \( \log(2 \lambda) > \frac{3 \beta - \alpha^2}{2\beta} \).
          \end{itemize}
  \end{itemize}
\end{theorem}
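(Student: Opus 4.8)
The plan is to reduce the shape classification of $h$ to the sign pattern of a single auxiliary function of one variable. Since the qualitative shape of $h(x)$ is dictated entirely by where $h'(x)$ changes sign, I would first differentiate $h(x) = (\alpha + \beta x)[1 + \lambda\exp(-\alpha x - \tfrac{\beta}{2}x^2)]$. Writing $y = \alpha x + \tfrac{\beta}{2}x^2$ exactly as in the proof of the preceding theorem, and using $(\alpha+\beta x)^2 = \alpha^2 + 2\beta y$, the derivative collapses to
\[
  h'(x) = \phi(y), \qquad \phi(y) = \beta + \lambda e^{-y}\bigl(\beta - \alpha^2 - 2\beta y\bigr).
\]
Because $y = y(x)$ is a strictly increasing bijection of $[0,\infty)$ onto itself, the sign pattern of $h'$ in $x$ is identical to the sign pattern of $\phi$ in $y$ on $[0,\infty)$; this is the structural observation that makes the entire classification tractable.

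Next I would record the three facts that drive the analysis. First, the boundary values: $\phi(0) = \beta(1+\lambda) - \lambda\alpha^2$, and $\phi(y) \to \beta > 0$ as $y \to \infty$ (the exponential dominates the linear term), so $h$ is always eventually increasing. Second, differentiating gives $\phi'(y) = \lambda e^{-y}(\alpha^2 - 3\beta + 2\beta y)$, which vanishes only at $y^{\ast} = \tfrac{3\beta - \alpha^2}{2\beta}$; hence the sign of $\alpha^2 - 3\beta$ decides whether $y^{\ast}$ lies in $(0,\infty)$. Third, a short computation using $2\beta y^{\ast} = 3\beta - \alpha^2$ yields the clean value $\phi(y^{\ast}) = \beta\bigl(1 - 2\lambda e^{-y^{\ast}}\bigr)$, so that $\phi(y^{\ast}) < 0$ precisely when $y^{\ast} < \log(2\lambda)$, i.e. when $\log(2\lambda) > \tfrac{3\beta - \alpha^2}{2\beta}$ --- exactly the threshold appearing in the statement.

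The case analysis then runs as follows. If $\alpha^2 \geq 3\beta$ then $y^{\ast} \leq 0$, so $\phi$ is increasing on $[0,\infty)$ and its sign is governed solely by $\phi(0)$: when $\beta(1+\lambda) \geq \lambda\alpha^2$ we get $\phi \geq 0$ throughout (monotone increasing, case (i)(b)), and when $\beta(1+\lambda) \leq \lambda\alpha^2$ we get a single sign change from negative to positive (bathtub, case (ii)(b)). If $\alpha^2 < 3\beta$ then $y^{\ast} > 0$ is the unique interior minimizer of $\phi$, so I compare the minimum value $\phi(y^{\ast})$ with zero: if $\phi(y^{\ast}) \geq 0$ (equivalently $\log(2\lambda) \leq \tfrac{3\beta-\alpha^2}{2\beta}$) then $\phi \geq 0$ and $h$ increases (case (i)(a)); if $\phi(y^{\ast}) < 0$ the shape is decided by $\phi(0)$, giving a single crossing and a bathtub when $\beta(1+\lambda) \leq \lambda\alpha^2$ (case (ii)(a)), and two crossings with pattern positive--negative--positive when $\phi(0) > 0$, i.e. an increasing--decreasing--increasing (``inverse bathtub'') shape (case (iii)(a)). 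In each subcase strict monotonicity of $\phi$ on each piece, combined with the intermediate value theorem, guarantees the claimed number of zeros.

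The main obstacle I anticipate is not any single calculation but the bookkeeping: correctly reducing the four regimes to the signs of the three quantities $\alpha^2 - 3\beta$, $\phi(0) = \beta(1+\lambda)-\lambda\alpha^2$, and $\phi(y^{\ast})$, and verifying that every branch matches the precise inequalities quoted. I would also want to pin down the boundary and degenerate cases --- for instance that case (iii)(a) tacitly requires $\beta(1+\lambda) > \lambda\alpha^2$ (otherwise $\phi(0) \leq 0$ and one falls back into the bathtub regime (ii)(a)), and that when $\log(2\lambda) \leq 0$ in the branch $\alpha^2 < 3\beta$ the minimum $\phi(y^{\ast})$ is automatically positive, forcing $h$ to be monotone --- so as to confirm that the stated conditions are mutually consistent and jointly exhaustive.
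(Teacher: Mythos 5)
Your proposal follows essentially the same route as the paper's proof: differentiate $h$, substitute $y = \alpha x + \tfrac{\beta}{2}x^2$ to obtain $\phi(y) = \beta + \lambda e^{-y}(\beta - \alpha^2 - 2\beta y)$, and classify the shapes through the signs of $\phi(0) = \beta(1+\lambda) - \lambda\alpha^2$, the limit $\phi(\infty) = \beta > 0$, and the critical value $\phi(y^{\ast}) = \beta\bigl(1 - 2\lambda e^{-y^{\ast}}\bigr)$ at $y^{\ast} = \tfrac{3\beta - \alpha^2}{2\beta}$, exactly mirroring the paper's case analysis on $\alpha^2$ versus $3\beta$. Your closing remarks are also correct and slightly sharpen the paper: the condition $\beta(1+\lambda) > \lambda\alpha^2$ that you note is tacitly required for the inverse-bathtub case appears in the paper's proof of case (iii) but is omitted from the theorem statement itself, where (iii)(a) as stated overlaps with (ii)(a).
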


\begin{proof}
  We begin by differentiating the hazard rate function $h(x)$ with respect to $x$:

  \begin{equation*}
    h'(x) = \beta + \lambda [\beta - (\alpha + \beta x)^2] \exp\left(-\alpha x - \frac{\beta}{2} x^2\right).
  \end{equation*}

  Alternatively, making the substitution \( y = \alpha x + \frac{\beta}{2} x^2 \), we obtain:

  \begin{equation*}
    h'(y)= \beta + \lambda [ \beta - \alpha^2 - 2\beta y ] \exp(-y).
  \end{equation*}

  Note that the initial value and limit are given by:
  \begin{equation*}
    h'(0) = \beta + \lambda (\beta - \alpha^2 ) = \beta (1 + \lambda ) - \lambda \alpha^2 \mathpunct{,} \quad \text{and} \quad \lim_{y \to \infty} h'(y) = \beta > 0.
  \end{equation*}

  The second derivative is given by \( h''(y) = \exp(-y) [ 2 \beta y + \alpha^2 - 3 \beta ] \).

  \medskip
  \textbf{Case (i):} $h(x)$ is increasing if \( h'(x) \geq 0 \) for all $x$, or equivalently, \( h'(y) \geq 0 \) for all $y$. This occurs when \( h'(0) \geq 0 \) and $h'(y)$ is monotonically increasing.

  Now, $h'(y)$ is increasing if \( h''(y) \geq 0 \), which holds when:
  \begin{equation*}
    \exp(-y)[2\beta y + \alpha^2 - 3\beta] \geq 0 \mathpunct{,} \quad \text{or equivalently} \mathpunct{,} \quad \alpha^2 \geq 3\beta.
  \end{equation*}

  When $h''(y)$ has a critical point, i.e., $h''(y)=0$, we find:
  \begin{equation*}
    \exp(-y)[2\beta y + \alpha^2 - 3\beta] = 0 \mathpunct{,} \quad \text{or} \quad y_0 = \frac{3\beta - \alpha^2}{2\beta} > 0 \mathpunct{,} \quad \text{provided} \quad \alpha^2 \leq 3\beta.
  \end{equation*}

  At this critical point:
  \begin{align*}
    h'(y_0) & = h'\left(\frac{3 \beta - \alpha^2}{2 \beta}\right) = \beta + \lambda \left[\beta - \alpha^2 - 2 \beta \cdot \frac{3 \beta - \alpha^2}{2 \beta}\right] \exp(-y_0) \\
            & = \beta + \lambda [\beta - \alpha^2 - 3 \beta + \alpha^2] \exp(-y_0) = \beta - 2 \beta \lambda \exp(-y_0).
  \end{align*}

  This expression is non-negative when \( 0 < \log(2 \lambda) \leq \frac{3 \beta - \alpha^2}{2 \beta} \).

  Therefore, $h(x)$ is increasing under:
  \begin{itemize}
    \item[(a)] \( \alpha^2 < 3 \beta \) and \( 0 < \log(2 \lambda) \leq \frac{3 \beta - \alpha^2}{2 \beta} \);
    \item[(b)] \( \alpha^2 \geq 3 \beta \) and \( \beta (1 + \lambda) \geq \lambda\alpha^2 \).
  \end{itemize}

  \medskip
  \textbf{Case (ii):}
  $h(x)$ exhibits a bathtub shape, i.e., decreasing\textendash\hspace{0pt}increasing if $h'(y)$ has exactly one real zero.

  This occurs under either condition:
  \begin{itemize}
    \item[(a)] \( h'(0) \leq 0 \) and $h'(y)$ is monotonically increasing, or
    \item[(b)] \( h'(0) \leq 0 \) and \( h'(y_0) \leq 0 \) (with \( y_0 = \frac{3 \beta - \alpha^2}{2 \beta} \)) while $h'(y)$ is increasing.
  \end{itemize}

  For $y > 0$, the second derivative vanishes when:
  \begin{equation*}
    h''(y) = e^{-y}\,[2\beta y + \alpha^2 - 3\beta] = 0 .
  \end{equation*}

  Therefore,
  \begin{equation*}
    y_0 = \frac{3\beta - \alpha^2}{2\beta} > 0 \mathpunct{,} \qquad \text{provided } \alpha^2 < 3\beta .
  \end{equation*}

  Additionally, we require:
  \begin{equation*}
    h'(0) = \beta (1 + \alpha) - \lambda \alpha^2 \leq 0 \mathpunct{,} \quad \text{and} \quad h'(y_0) = \beta - 2 \lambda \beta \exp(-y_0) \leq 0.
  \end{equation*}

  Thus, when \( \alpha^2 < 3 \beta \), the conditions become:
  \begin{equation*}
    \beta (1 + \alpha) - \lambda \alpha^2 \leq 0 \mathpunct{,} \quad \text{and} \quad \log(2 \lambda) > \frac{3 \beta - \alpha^2}{2 \beta}.
  \end{equation*}

  When \( \alpha^2 \geq 3 \beta \), the condition simplifies to \( \beta (1 + \alpha) - \lambda \alpha^2 \leq 0 \), with $h'(y)$ increasing in $y$.

  \medskip
  \textbf{Case (iii):} $h(x)$ exhibits an inverse bathtub shape, i.e., increasing\textendash\hspace{0pt}decreasing\textendash\hspace{0pt}increasing if \( h'(0) > 0 \) and \( h'(y_0) < 0 \), with $h'(y)$ increasing beyond $y_0$.

  This behavior emerges when:
  \begin{equation*}
    \alpha^2 < 3 \beta, \quad \beta (1 + \lambda) \geq \lambda \alpha^2 \mathpunct{,} \quad \text{and} \quad \log(2 \lambda) > \frac{3 \beta - \alpha^2}{2\beta}.
  \end{equation*}

  The corresponding hazard rate shapes for these three cases are illustrated in \autoref{fig_haz}.
\end{proof}

\FloatBarrier

\begin{figure}[htbp]
  \centering
  \begin{minipage}{0.3\textwidth}
    \centering
    \includegraphics[width=0.9\linewidth]{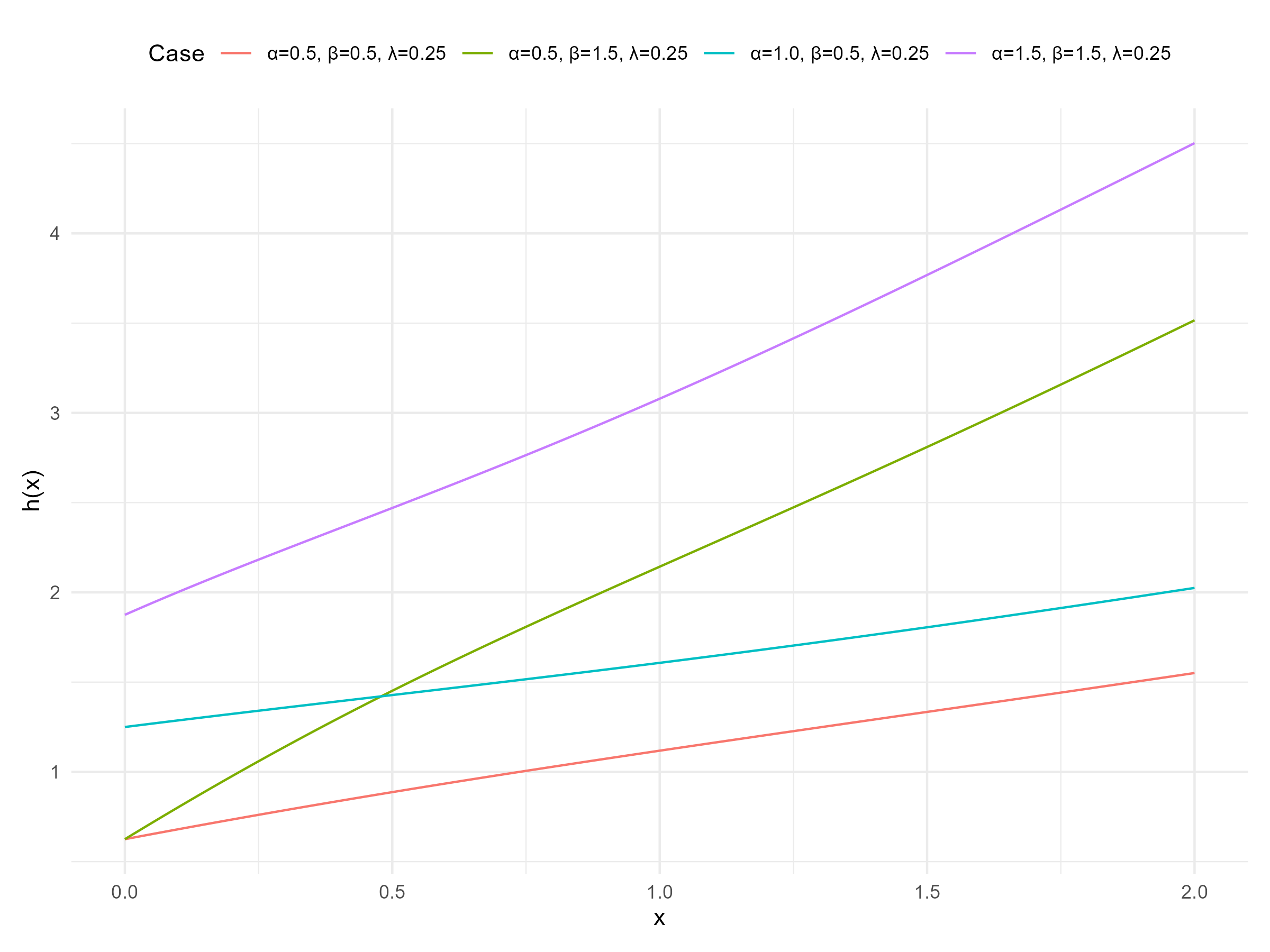}
    {\vspace{0.5ex}\centering\scriptsize increasing\par}
  \end{minipage}%
  \hspace{0.02\textwidth}%
  \begin{minipage}{0.3\textwidth}
    \centering
    \includegraphics[width=0.9\linewidth]{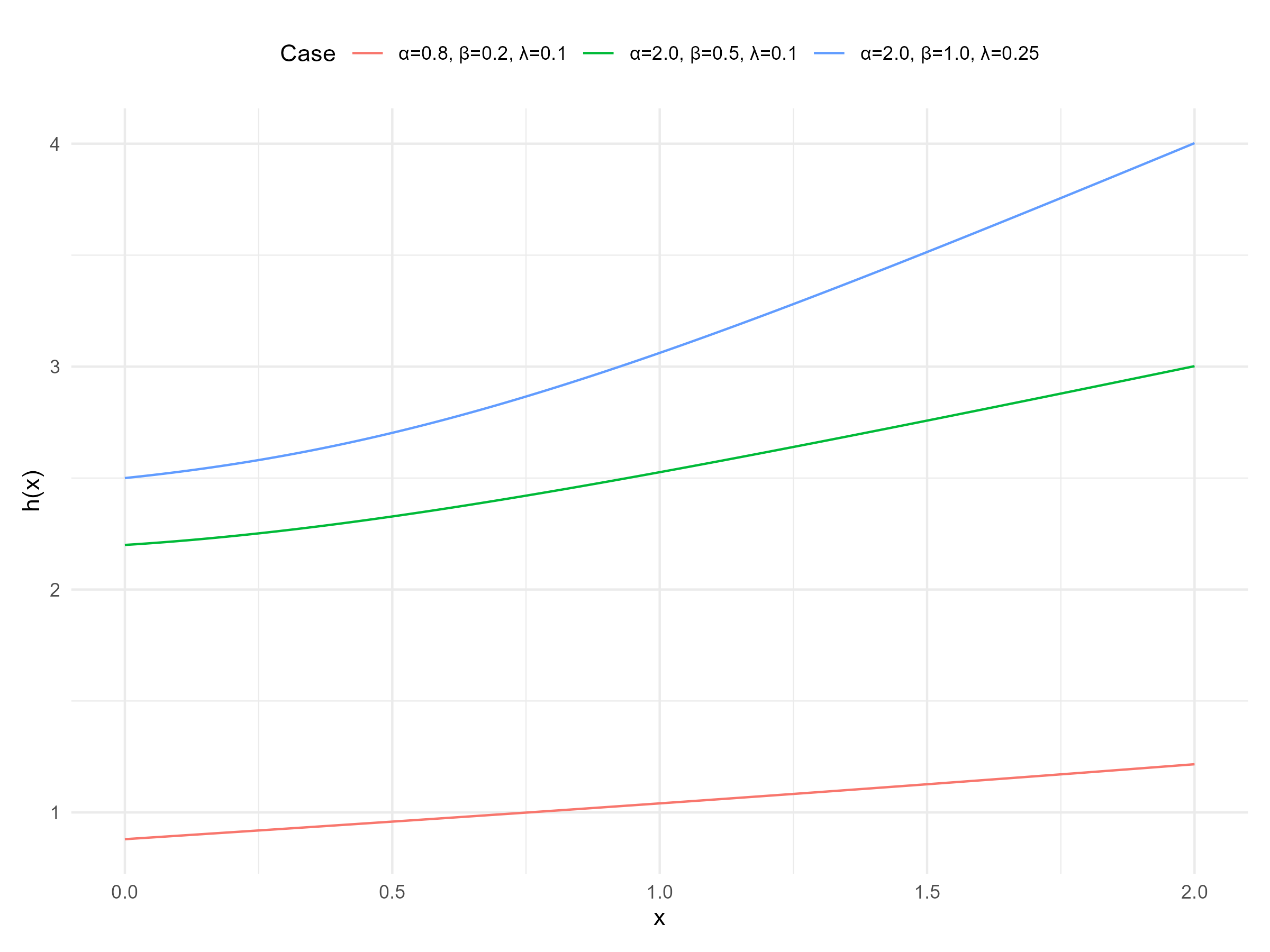}
    {\vspace{0.5ex}\centering\scriptsize increasing\par}
  \end{minipage}%
  \hspace{0.02\textwidth}%
  \begin{minipage}{0.3\textwidth}
    \centering
    \includegraphics[width=0.9\linewidth]{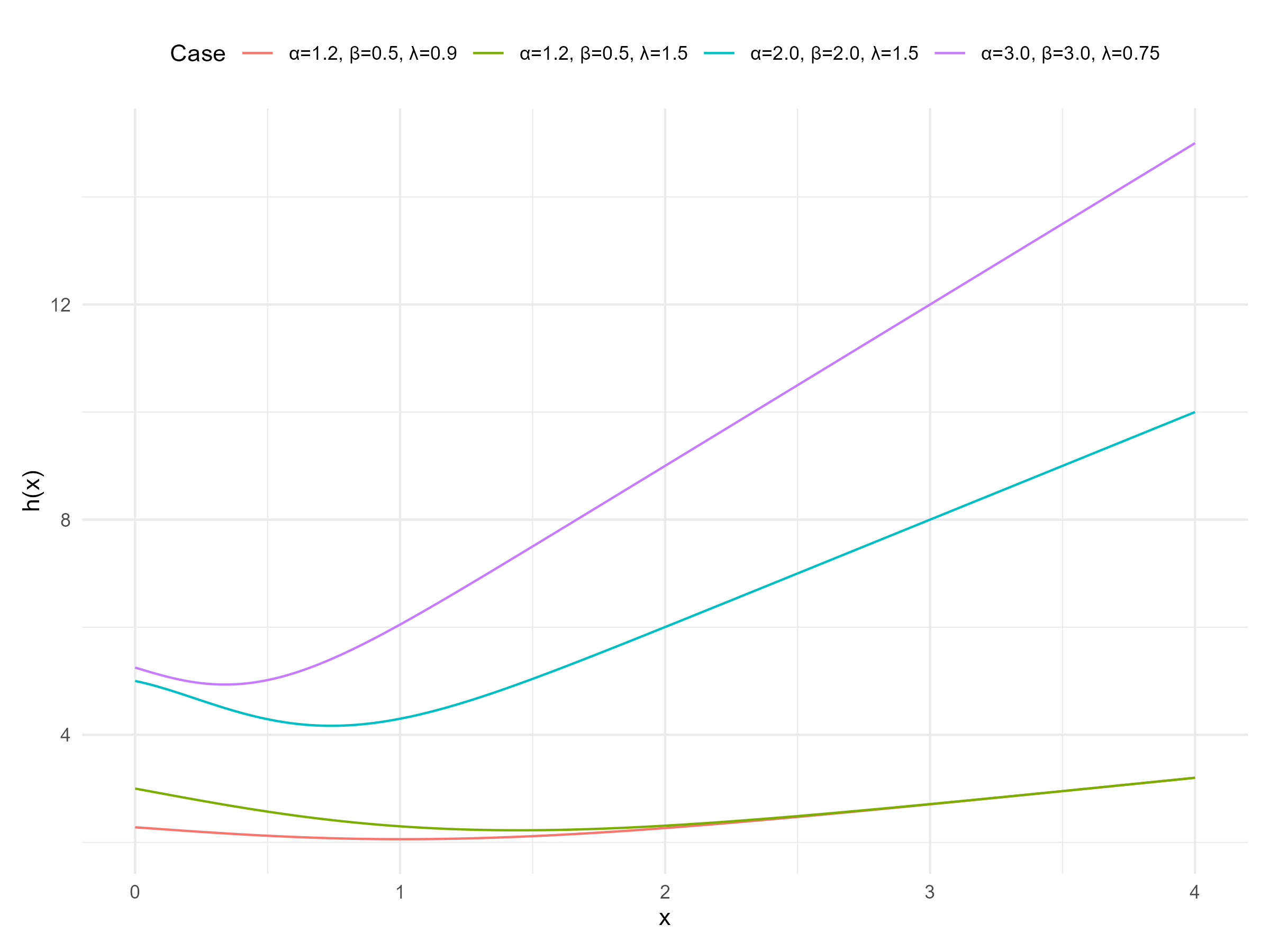}
    {\vspace{0.5ex}\centering\scriptsize bathtub shape\par}
  \end{minipage}

  \vspace{0.5em}

  \begin{minipage}{0.48\textwidth}
    \centering
    \includegraphics[width=0.9\linewidth]{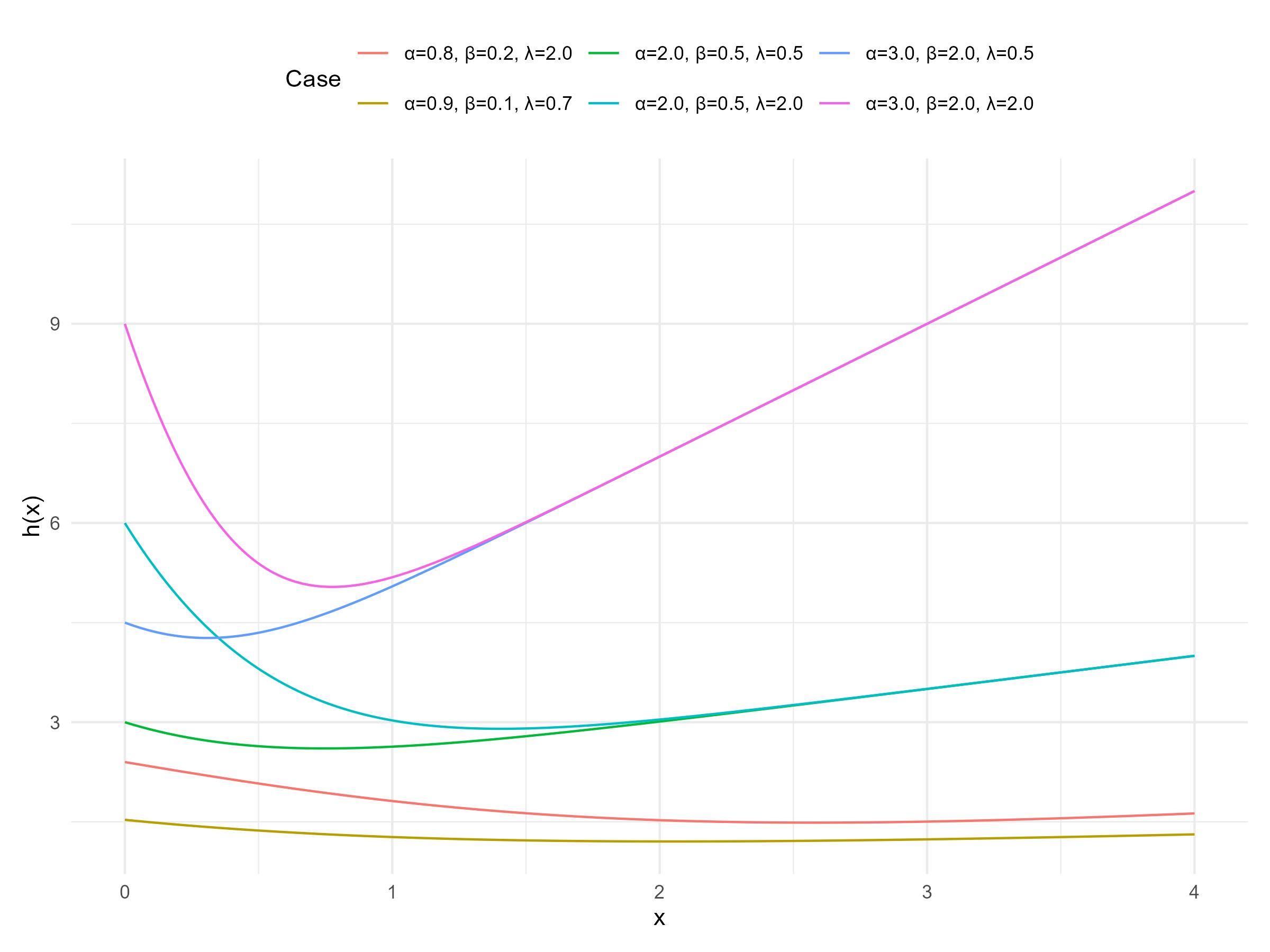}
    {\vspace{0.5ex}\centering\scriptsize bathtub shape\par}
  \end{minipage}%
  \hspace{0.04\textwidth}%
  \begin{minipage}{0.48\textwidth}
    \centering
    \includegraphics[width=0.9\linewidth]{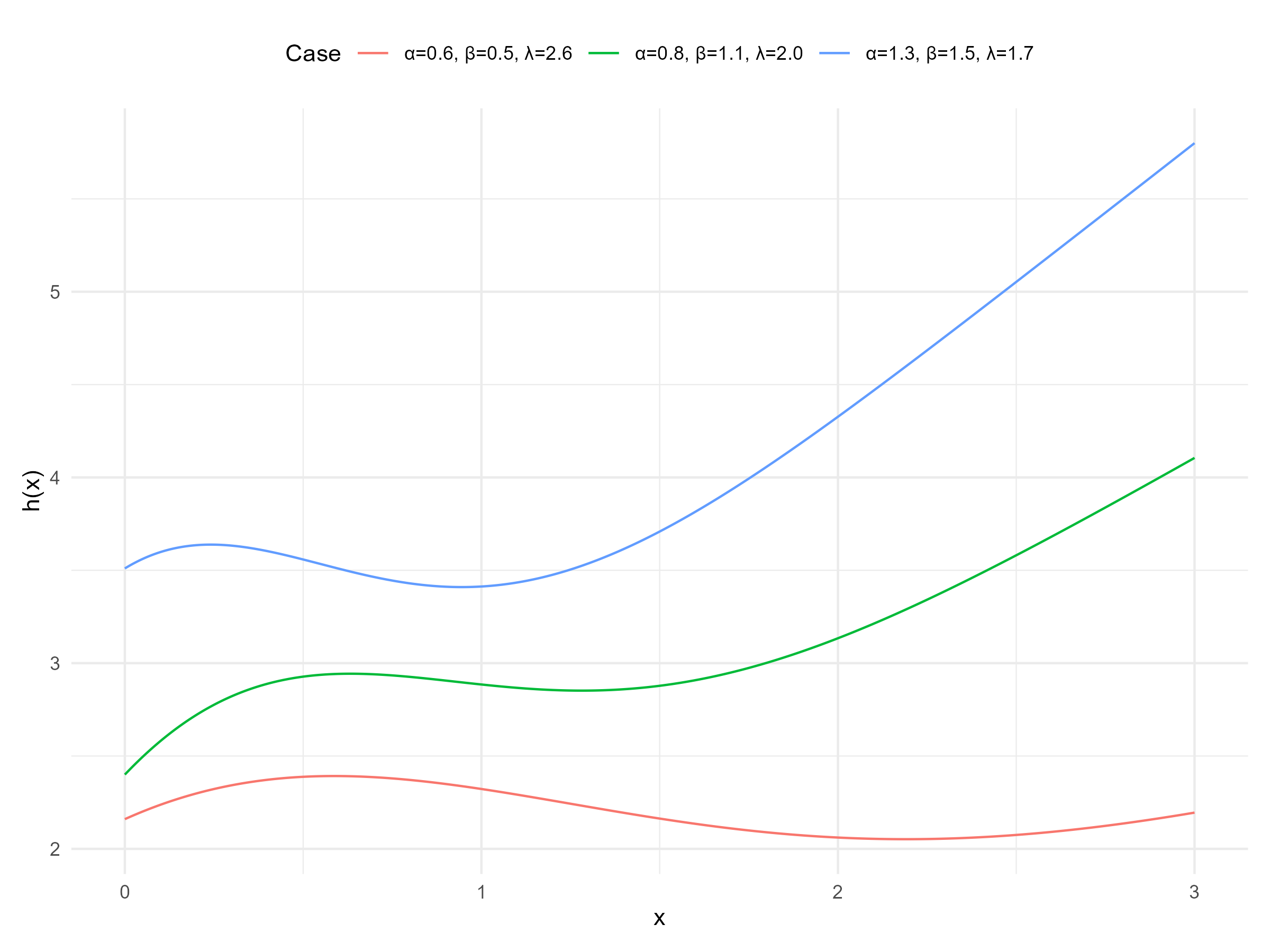}
    {\vspace{0.5ex}\centering\scriptsize inverse bathtub shape\par}
  \end{minipage}
  \caption{Plot of the Hazard Rate functions for different values of parameters}
  \label{fig_haz}
\end{figure}

\FloatBarrier

\section{Statistical Properties}\label{sec_prop}

In probability and statistics, various characterizations of different lifetime
distributions have been proposed. In this section, we have studied some
characteristic statistics such as the mean residual life (MRL), the mean
inactivity time (MIT), moments, and quantiles, along with the results on
stochastic orders and order statistics of the proposed model.

\subsection{Mean Residual Life Function}

The mean residual life (MRL) function is very important in reliability and survival analysis. Consider an aircraft wing spar that is placed under observation. Assume that the wing spar is observed at time $x$ and it is found that it has not failed yet from any of the independent stress sources or shocks. Therefore, the variable of interest is $U_x = (U - x \mid U > x)$, which is known as residual life. It represents the expected remaining lifetime of the component, given that the failure has not occurred till time $x$. The mean residual life of a random variable $U \sim \text{CLFRD}(\alpha, \beta, \lambda)$ is given by:

\begin{equation*}
  M_U(x) = E[U - x \mid U > x] = \frac{1}{\bar{G}(x)} \int_{0}^{\infty} y \, g(y+x) \, dy
\end{equation*}

\begin{multline*}
  = \frac{1}{\exp\left[-\alpha x - \frac{\beta}{2}x^2 - \lambda + \lambda \exp\left(-\alpha x - \frac{\beta}{2}x^2\right)\right]} \Biggl[ \sum_{i=0}^{\infty} \sum_{j=0}^{\infty} \binom{j}{i} \frac{(-1)^{i+j} \lambda^{j}}{j!} \\
    \times \int_{0}^{\infty} y ( \alpha + \beta (y+x) ) \exp\left(-(i+1)\left(\alpha(y+x) + \frac{\beta}{2} (y+x)^2\right)\right) \, dy \\
    + \lambda \sum_{i=0}^{\infty} \sum_{j=0}^{\infty} \binom{j}{i} \frac{(-1)^{i+j} \lambda^{j}}{j!} \int_{0}^{\infty} y ( \alpha + \beta (y+x) ) \\
    \times \exp\left(-(i+2)\left(\alpha(y+x) + \frac{\beta}{2} (y+x)^2\right)\right) \, dy \Biggr].
\end{multline*}
\begin{multline*}
  \text{Now,} \quad \sum_{i=0}^{\infty} \sum_{j=0}^{\infty} \binom{j}{i} \frac{(-1)^{i+j} \lambda^{j}}{j!} \int_{0}^{\infty} y ( \alpha + \beta (y+x) ) \\
  \times \exp\left(-(i+1)\left(\alpha(y+x) + \frac{\beta}{2} (y+x)^2\right)\right) \, dy
\end{multline*}
\begin{multline*}
  = \sum_{i,j,k=0}^{\infty} \binom{j}{i} \lambda^j \beta^k 2^{-k} \frac{(-1)^{i+j+k} (i+1)^k}{j! k!} \int_{0}^{\infty} (y+x-x)(\alpha+\beta(y+x))(y+x)^{2k} \\
  \times \exp(-(i+1)\alpha(y+x)) \, dy
\end{multline*}
\begin{multline*}
  = \sum_{i,j,k=0}^{\infty} b_{ijk}\Biggl[ \alpha \int_{0}^{\infty} (y+x)^{2k+1} \exp(-(i+1)\alpha(y+x)) \, dy \\
    + \beta \int_{0}^{\infty} (y+x)^{2k+2} \exp(-(i+1)\alpha(y+x)) \, dy \\
    - \alpha x \int_{0}^{\infty} (y+x)^{2k} \exp(-(i+1)\alpha(y+x)) \, dy \\
    - \beta x \int_{0}^{\infty} (y+x)^{2k+1} \exp(-(i+1)\alpha(y+x)) \, dy\Biggr]
\end{multline*}
\begin{multline*}
  = \sum_{i,j,k=0}^{\infty} b_{ijk} \Biggl[ \frac{\alpha \Gamma(2k+2)}{[ (i+1)\alpha ]^{2k+2}} + \frac{\beta \Gamma(2k+3)}{[ (i+1)\alpha ]^{2k+3}} \\
    - \frac{\alpha x\Gamma(2k+1)}{[ (i+1)\alpha ]^{2k+1}} - \frac{\beta x\Gamma(2k+2)}{[ (i+1)\alpha ]^{2k+2}} \Biggr].
\end{multline*}

Therefore, the MRL for CLFRD($\alpha, \beta, \lambda$) is:

\begin{multline}
  M_U(x) = \exp\left[\alpha x + \frac{\beta}{2}x^2 + \lambda - \lambda \exp\left(-\alpha x - \frac{\beta}{2}x^2\right)\right] \\
  \times \Biggl[ \sum_{i,j,k=0}^{\infty} b_{ijk} \left( \frac{(\alpha + \beta x)\Gamma(2k+2)}{[(i+1)\alpha]^{2k+2}} - \frac{\alpha x\Gamma(2k+1)}{[(i+1)\alpha]^{2k+1}} + \frac{\beta\Gamma(2k+3)}{[(i+1)\alpha]^{2k+3}} \right) \\
  + \lambda \sum_{i,j,k=0}^{\infty} b'_{ijk}\left( \frac{(\alpha + \beta x)\Gamma(2k+2)}{[(i+2)\alpha]^{2k+2}} - \frac{\alpha x\Gamma(2k+1)}{[(i+2)\alpha]^{2k+1}} + \frac{\beta \Gamma(2k+3)}{[(i+2)\alpha]^{2k+3}} \right) \Biggr],
\end{multline}

where \( b_{ijk} = \binom{j}{i} \lambda^j \beta^k 2^{-k} \frac{(-1)^{i+j+k} (i+1)^k}{j! k!} \) and \( b'_{ijk} = \binom{j}{i} \lambda^j \beta^k 2^{-k} \frac{(-1)^{i+j+k} (i+2)^k}{j! k!} \).

\autoref{table_mrl} shows the MRL value at $x = 0.5$ for 8 different parameter combinations.

\begin{table}[htbp]
  \centering
  \caption{Mean Residual Life at $x = 0.5$}
  \label{table_mrl}
  \vspace{1em}
  \setlength{\tabcolsep}{12pt}
  \begin{tabular}{cccc}
    \toprule
    $\alpha$ & $\beta$ & $\lambda$ & MRL$_{0.5}$ \\
    \midrule
    2.0      & 2.0     & 2.0       & 0.2211234   \\
             &         & 0.5       & 0.2668270   \\
             & 0.5     & 2.0       & 0.2994618   \\
             &         & 0.5       & 0.3773817   \\
    0.5      & 2.0     & 2.0       & 0.2831307   \\
             &         & 0.5       & 0.3962282   \\
             & 0.5     & 2.0       & 0.5214610   \\
             &         & 0.5       & 0.7728661   \\
    \bottomrule
  \end{tabular}
  \vspace{2em}
\end{table}

Larger $\alpha$ values yield a higher MRL, which means that the systems last longer on average after $x = 0.5$.
MRL increases significantly for smaller $\alpha$ and $\beta$ values, which means less failure pressure.

\subsection{Mean Inactivity Time Function}

Consider an aircraft wing spar that is placed under observation.
In most practical situations, continuous monitoring is usually
not possible. Assume that the wing spar is observed at time $x$,
and it is found that it has already failed due to any of the
independent stress sources of shocks. Therefore, the random variable
of interest is $U_x = (x - U \mid U \leq x)$, which is known as inactivity
time \cite{Ruiz1996}. It represents the time that has elapsed since the
component's failure, given that the failure occurred at or before time $x$.
Inactivity time plays a significant role in the development of maintenance
strategies. The mean (expected) inactivity time, also called reversed mean
residual life or expected stopped time \cite{Navarro1997}, of CLFRD $(\alpha, \beta, \lambda)$ is defined as:
\begin{multline*}
  m_{U}(x) = E[x - U \mid U \leq x]= \frac{1}{G(x)} \left[ \int_{0}^{x} (x - u) \, g(u) \, du \right] \\
  = \frac{1}{1 - \exp\left[-\alpha x - \frac{\beta}{2} x^2 - \lambda + \lambda \exp\left(-\alpha x - \frac{\beta}{2} x^2\right)\right]} \\
  \times \Biggl[ \sum_{i=0}^{\infty} \sum_{j=0}^{\infty} \binom{j}{i} \frac{(-1)^{i+j} \lambda^{j}}{j!} \int_{0}^{x} y \, f(y) \, \bar{F}^i(y) \, dy  \\
    + \lambda \sum_{i=0}^{\infty} \sum_{j=0}^{\infty} \binom{j}{i} \frac{(-1)^{i+j} \lambda^{j}}{j!} \int_{0}^{x} y \, f(y) \, \bar{F}^{i+1}(y) \, dy \Biggr].
\end{multline*}

Now,
\begin{multline*}
  \sum_{i=0}^{\infty} \sum_{j=0}^{\infty} \binom{j}{i} \frac{(-1)^{i+j} \lambda^{j}}{j!} \int_{0}^{x} y \, f(y) \, \bar{F}^i(y) \, dy \\
  = \sum_{i,j,k=0}^{\infty} \binom{j}{i} \lambda^j \beta^k 2^{-k}\frac{(-1)^{i+j+k}(i+1)^k y^{2k}}{j! k!}
  \Biggl[ \alpha \int_{0}^{x} y^{2k+1} \exp(-\alpha (i+1) y) \, dy \Biggr. \\
    \Biggl. + \beta \int_{0}^{x} y^{2k+2} \exp(-\alpha (i+1) y) \, dy \Biggr] \\
  = \sum_{i,j,k=0}^{\infty} \binom{j}{i} \lambda^j \beta^k 2^{-k} \frac{(-1)^{i+j+k} (i+1)^k y^{2k}}{j! k!} \Biggl( \frac{\alpha \gamma(2k+2, \alpha (i+1) x)}{[\alpha (i+1)]^{2k+2}} \Biggr. \\
  \Biggl. + \frac{\beta \gamma(2k+3, \alpha (i+1) x)}{[\alpha (i+1)]^{2k+3}} \Biggr).
\end{multline*}

Therefore, the MIT for CLFRD($\alpha, \beta, \lambda$) is:
\begin{multline}
  m_U(x) = x - \frac{\exp\left[\alpha x + \frac{\beta}{2} x^2 + \lambda - \lambda \exp\left(-\alpha x - \frac{\beta}{2} x^2\right)\right]}{\exp\left[\alpha x + \frac{\beta}{2} x^2 + \lambda - \lambda \exp\left(-\alpha x - \frac{\beta}{2} x^2 \right)\right] - 1} \\
  \times \Biggl[ \sum_{i,j,k=0}^{\infty} b_{ijk} \left( \frac{\alpha \gamma(2k+2, \alpha (i+1) x)}{[\alpha (i+1)]^{2k+2}} + \frac{\beta \gamma(2k+3, \alpha (i+1) x)}{[\alpha (i+1)]^{2k+3}} \right) \\
  + \lambda \sum_{i,j,k=0}^{\infty} b'_{ijk} \left( \frac{\alpha \gamma(2k+2, \alpha (i+2) x)}{[\alpha (i+2)]^{2k+2}} + \frac{\beta \gamma(2k+3, \alpha (i+2) x)}{[\alpha (i+2)]^{2k+3}} \right) \Biggr],
\end{multline}

where \( b_{ijk} = \binom{j}{i} \lambda^j \beta^k 2^{-k} \frac{(-1)^{i+j+k} (i+1)^k}{j! k!} \) and \( b'_{ijk} = \binom{j}{i} \lambda^j \beta^k 2^{-k} \frac{(-1)^{i+j+k} (i+2)^k}{j! k!} \).

\autoref{table_mit} shows the MIT value at $x = 0.5$ for 8 different parameter combinations.

\begin{table}[htbp]
  \centering
  \caption{Mean Inactivity Time at $x = 0.5$}
  \label{table_mit}
  \vspace{1em}
  \setlength{\tabcolsep}{12pt}
  \begin{tabular}{cccc}
    \toprule
    $\alpha$ & $\beta$ & $\lambda$ & MIT$_{0.5}$ \\
    \midrule
    2.0      & 2.0     & 2.0       & 0.3592062   \\
             &         & 0.5       & 0.3090133   \\
             & 0.5     & 2.0       & 0.3578331   \\
             &         & 0.5       & 0.3114150   \\
    0.5      & 2.0     & 2.0       & 0.2714062   \\
             &         & 0.5       & 0.2417515   \\
             & 0.5     & 2.0       & 0.2763928   \\
             &         & 0.5       & 0.2556945   \\
    \bottomrule
  \end{tabular}
  \vspace{2em}
\end{table}

\FloatBarrier

As $\alpha$ increases, MIT tends to decrease, which means that the system becomes more prone to failure.

\subsection{Moments}

In statistical analysis and applications, moments play an important role in describing the key features and characteristics of a distribution. Let $U \sim \text{CLFRD}(\alpha, \beta, \lambda)$, then the $r^{th}$ moment of $U$ for $\lambda > 0$ can be written as:

\begin{equation*}
  E[U^r] = \mu^{(r)} = \int_{0}^{\infty} y^r g(y) dy,
\end{equation*}

or, equivalently,

\begin{multline*}
  \mu^{(r)} = \sum_{i=0}^{\infty} \sum_{j=0}^{\infty} \binom{j}{i} \frac{(-1)^{i+j} \lambda^{j}}{j!} \int_{0}^{\infty} y^r \, f(y) \, \bar{F}^{i}(y) \, dy \\
  + \lambda \sum_{i=0}^{\infty} \sum_{j=0}^{\infty} \binom{j}{i} \frac{(-1)^{i+j} \lambda^{j}}{j!} \int_{0}^{\infty} y^r \, f(y) \, \bar{F}^{i+1}(y) \, dy
\end{multline*}

\begin{multline}
  = \sum_{i,j,k=0}^{\infty} \binom{j}{i} \lambda^j \beta^k 2^{-k} \frac{(-1)^{i+j+k}}{j! k!}\Biggl[ \frac{\alpha (i+1)^k \Gamma (r+2k+1)}{[\alpha (i+1)]^{r+2k+1}}\\
    + \frac{\beta (i+1)^k \Gamma (r+2k+2)}{[\alpha (i+1)]^{r+2k+2}} + \frac{\alpha \lambda (i+2)^k \Gamma (r+2k+1)}{[\alpha (i+2)]^{r+2k+1}} \\
    + \frac{\beta \lambda (i+2)^k \Gamma (r+2k+2)}{[\alpha (i+2)]^{r+2k+2}}\Biggr].
\end{multline}

The corresponding first and second moments about the origin are given by:
\begin{multline*}
  E(U) = \sum_{i,j,k=0}^{\infty} c_{ijk}\Biggl[ \frac{\alpha (i+1)^k \Gamma (2k+2)}{[(i+1)\alpha]^{2k+2}} + \frac{\beta (i+1)^k \Gamma (2k+3)}{[(i+1)\alpha]^{2k+3}} \\
    + \frac{\alpha \lambda (i+2)^k \Gamma (2k+2)}{[(i+2)\alpha]^{2k+2}} + \frac{\beta \lambda (i+2)^k \Gamma (2k+3)}{[(i+2)\alpha]^{2k+3}}\Biggr],
\end{multline*}

and
\begin{multline*}
  E(U^2) = \sum_{i,j,k=0}^{\infty} c_{ijk}\Biggl[ \frac{\alpha (i+1)^k \Gamma (2k+3)}{[(i+1)\alpha]^{2k+3}} + \frac{\beta (i+1)^k \Gamma (2k+4)}{[(i+1)\alpha]^{2k+4}} \\
    + \frac{\alpha \lambda (i+2)^k \Gamma (2k+3)}{[(i+2)\alpha]^{2k+3}} + \frac{\beta \lambda (i+2)^k \Gamma (2k+4)}{[(i+2)\alpha]^{2k+4}}\Biggr],
\end{multline*}

where $c_{ijk}=\binom{j}{i} \lambda^j \beta^k 2^{-k} \frac{(-1)^{i+j+k}}{j! k!}$.
\subsection{Quantile and Median}

\begin{theorem}
  Let $U \sim \text{CLFRD}(\alpha, \beta, \lambda)$. Therefore, the $Q^{th}$ quantile is given by
  \begin{equation}\label{eqn1}
    x_Q = \beta^{-1} \left[-\alpha + \sqrt{\alpha^2 + 2\beta[W_0(\lambda(1-Q)\exp(\lambda)) - \lambda - \log(1-Q)]}\right].
  \end{equation}
\end{theorem}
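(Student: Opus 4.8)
The plan is to invert the CDF directly. Starting from the defining relation $G(x_Q) = Q$, equivalently $\bar{G}(x_Q) = 1 - Q$, I would first collapse the quadratic exponent by introducing the auxiliary variable $y = \alpha x_Q + \frac{\beta}{2} x_Q^2$, so that the survival function reduces to $\exp[-y - \lambda + \lambda e^{-y}] = 1 - Q$. Taking logarithms then produces the single transcendental equation $-y - \lambda + \lambda e^{-y} = \log(1-Q)$ in the one unknown $y$.

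The crux is to recognize a Lambert-$W$ structure hidden in this equation. I would introduce a second auxiliary quantity $W = \lambda e^{-y}$. Substituting $\lambda e^{-y} = W$ into the log-equation yields $y = W - \lambda - \log(1-Q)$, and feeding this back into the definition of $W$ gives $W = \lambda \exp(-W + \lambda + \log(1-Q)) = \lambda(1-Q)e^{\lambda} e^{-W}$, i.e. $W e^{W} = \lambda(1-Q)e^{\lambda}$. By the definition of the Lambert $W$ function this is solved by $W = W_0(\lambda(1-Q)e^{\lambda})$; since the argument $\lambda(1-Q)e^{\lambda}$ is strictly positive for $0 < Q < 1$, it lies in the domain where the principal branch $W_0$ is real and single-valued, so the branch choice is unambiguous. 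Back-substitution then gives $y = W_0(\lambda(1-Q)e^{\lambda}) - \lambda - \log(1-Q)$.

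Finally I would undo the substitution $y = \alpha x_Q + \frac{\beta}{2} x_Q^2$ by solving the quadratic $\frac{\beta}{2} x_Q^2 + \alpha x_Q - y = 0$ for $x_Q$. Its roots are $x_Q = \beta^{-1}[-\alpha \pm \sqrt{\alpha^2 + 2\beta y}]$, and because $x_Q \ge 0$ with $\alpha, \beta > 0$ only the $+$ root is admissible. Inserting the expression for $y$ yields the stated formula in \autoref{eqn1}.

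The main obstacle is the middle step: spotting the correct auxiliary substitution $W = \lambda e^{-y}$ that exposes the $W e^{W}$ form, since without it the relation looks like an arbitrary transcendental equation. A secondary point requiring care is the branch analysis: one should confirm that the argument is nonnegative so the principal branch $W_0$ delivers the unique admissible real solution, and that the positive square-root branch is forced by the support constraint $x_Q \ge 0$.
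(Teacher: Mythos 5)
Your proposal is correct and follows essentially the same route as the paper: reduce to $y=\alpha x_Q+\frac{\beta}{2}x_Q^2$, isolate a Lambert-$W$ structure, and finish with the positive root of the quadratic. Your auxiliary variable $W=\lambda e^{-y}$ is literally the same quantity as the paper's substitution $z=\lambda+\log(1-Q)+y$ (they coincide by the log-equation itself), so the two derivations differ only in the direction from which that substitution is introduced, and your added remarks on the positivity of the argument of $W_0$ and the choice of the $+$ root are sound.
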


\begin{proof}
  Let $G(x) = Q$; therefore,
  \begin{equation*}
    1 - Q = \exp\left(-\lambda - \alpha x - \frac{\beta}{2} x^2 + \lambda \exp\left(-(\alpha x + \frac{\beta}{2} x^2)\right)\right), \quad 0 \leq Q \leq 1,
  \end{equation*}

  which is equivalent to
  \begin{equation}
    \label{eqn_quant1}
    [\lambda + \log(1 - Q) + y] \exp(y) = \lambda,
  \end{equation}

  where \( y = \left(\alpha x + \frac{\beta}{2} x^2 \right) \). Suppose that
  \begin{equation}
    \label{eqn_quant2}
    z = \lambda + \log(1 - Q) + y \quad \text{or} \quad
    z - \lambda - \log(1 - Q) = y.
  \end{equation}

  Substituting \eqref{eqn_quant2} into \eqref{eqn_quant1} yields: \( z \exp(z - \lambda - \log(1 - Q)) = \lambda\) or \( z \exp(z) = \lambda \exp(\lambda)(1 - Q) \).
  Applying the Lambert W function yields:
  $$ y + \lambda + \log(1 - Q) = W_0[\lambda(1 - Q)\exp(\lambda)],$$
  or,
  $$\quad y - W_0[\lambda(1-Q)\exp(\lambda)] + \lambda + \log(1-Q) = 0.$$

  Therefore, the $Q^{th}$ quantile of CLFRD$(\alpha, \beta, \lambda)$ is

  \begin{equation}\label{eqn_quant3}
    x_Q = \beta^{-1} \left[- \alpha + \sqrt{\alpha^2 + 2\beta[W_0(\lambda(1-Q)\exp(\lambda)) - \lambda - \log(1-Q)]}\right].
  \end{equation}
\end{proof}

By substituting $Q = \frac{1}{2}$ in \eqref{eqn_quant3}, the median of the new model is obtained as:

\begin{equation*}
  x_{MD} = \beta^{-1} \left(-\alpha + \sqrt{\alpha^2 + 2\beta\left[W_0\left(\frac{\lambda \exp(\lambda)}{2}\right) - \lambda - \log\left(\frac{1}{2}\right)\right]} \right).
\end{equation*}

\subsection{Stochastic Orders}

In the area of reliability and survival analysis, various stochastic orderings between two random variables have been considered.
The following theorem gives the conditions under which CLFRD preserves
likelihood ratio ordering. A random variable $X$ is said to be larger than another random variable $Y$ in likelihood
ratio order (written as $X \geq_{LR} Y$) if for all $x \geq 0$, $g_{X}(x)/g_{Y}(x)$ is increasing in $x$.

\begin{theorem}
  Let $X$ and $Y$ be two random variables following CLFRD with parameters
  $\alpha_{1},~\beta_{1},~\lambda_{1}$ and $\alpha_{2},~\beta_{2},~\lambda_{2}$ respectively,
  then $X\geq_{LR}(\leq_{LR}) Y$ provided
  \begin{itemize}
    \item[(i)] $\alpha_{1} \leq(\geq) \alpha_{2}$;
    \item[(ii)] $\beta_{1} \leq(\geq) \beta_{2}$;
    \item[(iii)] $\lambda_{1} \leq(\geq) \lambda_{2}$.
  \end{itemize}
\end{theorem}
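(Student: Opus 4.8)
The plan is to use the differential characterization of likelihood ratio order: $X \geq_{LR} Y$ holds precisely when $\eta(x) := \log g_X(x) - \log g_Y(x)$ is nondecreasing on $[0,\infty)$, and $X \leq_{LR} Y$ when it is nonincreasing. Writing $\phi(x) = \alpha x + \frac{\beta}{2}x^2$, $w = \alpha + \beta x$ and $u = \lambda e^{-\phi}$, from \eqref{eq1} we have $\log g(x) = \log w + \log(1+u) - \phi - \lambda + u$, whose $x$-derivative is
\[
T(x;\alpha,\beta,\lambda) = (\log g)'(x) = \frac{\beta}{w} - w\left[1 + u + \frac{u}{1+u}\right].
\]
Then $\eta'(x) = T(x;\alpha_1,\beta_1,\lambda_1) - T(x;\alpha_2,\beta_2,\lambda_2)$, so it suffices to show that $T$ is monotone decreasing in each of the three parameters; chaining the componentwise inequalities $\alpha_1 \le \alpha_2$, $\beta_1 \le \beta_2$, $\lambda_1 \le \lambda_2$ (varying one parameter at a time) then yields $T(\cdot;\alpha_1,\beta_1,\lambda_1) \ge T(\cdot;\alpha_2,\beta_2,\lambda_2)$ and hence $\eta' \ge 0$, giving $X \geq_{LR} Y$; the reverse orderings give $\leq_{LR}$.

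First I would dispose of the $\lambda$-direction, which is the clean case because $\lambda$ enters neither $w$ nor $\phi$. A direct computation gives
\[
\frac{\partial T}{\partial \lambda} = \frac{\partial^2 \log g}{\partial x\,\partial \lambda} = -w\, e^{-\phi}\left[1 + \frac{1}{(1+u)^2}\right] < 0,
\]
so $T$ is strictly decreasing in $\lambda$ for every $x \ge 0$, settling condition (iii) unconditionally.

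The main obstacle is the $\alpha$- and $\beta$-directions, since each of these parameters appears simultaneously in the linear factor $w = \alpha + \beta x$ (the hazard/shape term) and in the exponent $\phi$ (the exponential decay), producing contributions of opposite sign. Differentiating $T$ in $\alpha$, for instance, yields
\[
\frac{\partial T}{\partial \alpha} = \lambda w x\, e^{-\phi}\left[1 + \frac{1}{(1+u)^2}\right] - \frac{\beta}{w^2} - 1 - u - \frac{u}{1+u},
\]
and the expression in $\beta$ has the same competing structure, with extra terms arising from $\partial_\beta \log w = x/w$ and $\partial_\beta \phi = \frac{x^2}{2}$. Here the first term is positive while the rest are negative, so the sign is not automatic. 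I would attempt to dominate the positive contribution using the decay of $e^{-\phi}$, e.g.\ bounding $w x\, e^{-\phi}$ via elementary inequalities such as $z e^{-z} \le e^{-1}$ (applied to a suitable multiple of $\phi$) and absorbing the result into the $-(1+u)$ and $-\beta/w^2$ terms.

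I expect this step to be the crux and genuinely delicate: the positive term $\lambda w x\, e^{-\phi}$ is not uniformly dominated by the negative terms across all parameter ranges — in particular it can grow large when $\lambda$ is large and $\phi$ is only moderate — so a naive term-by-term bound will not close the argument. Establishing the uniform sign of $\partial_\alpha T$ and $\partial_\beta T$ is therefore where the real work lies, and it will likely require either a sharper global analysis of $T$ as a function of the single variable $\phi$, or additional hypotheses on the parameters beyond the bare orderings (i)--(iii).
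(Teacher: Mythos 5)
Your setup is sound and, as far as it goes, matches the paper: the paper's proof also writes out the ratio $g_X/g_Y$ and claims monotonicity, but it disposes of all three conditions with the single sentence ``by simple calculations it can be shown that the ratio is increasing'' --- no calculation is actually performed. So there is no hidden lemma in the paper that you failed to find. Your formula $T(x)=\beta/w - w\left[1+u+\frac{u}{1+u}\right]$ for $(\log g)'$ is correct, your computation $\partial T/\partial\lambda = -w e^{-\phi}\left[1+(1+u)^{-2}\right]<0$ is correct, and it legitimately settles the $\lambda$-link of your chaining argument (chaining itself is fine, since likelihood ratio order is transitive). What remains open in your proposal are exactly the $\alpha$- and $\beta$-links, which you candidly do not prove.

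That gap is real, and in fact it cannot be closed: the positive term $\lambda w x\, e^{-\phi}\left[1+(1+u)^{-2}\right]$ you isolated genuinely wins for admissible parameters, and the theorem as stated fails without further hypotheses. Take $\beta_1=\beta_2=1$, $\lambda_1=\lambda_2=e^{2}\approx 7.39$, and $\alpha_1=0.1<\alpha_2=0.2$, so that conditions (i)--(iii) hold. Then $u_i(x)=e^{2-\phi_i(x)}$, and direct evaluation of $T$ gives
\begin{equation*}
\eta'(0)\approx +5.93,\qquad \eta'(2)\approx -0.19,\qquad \eta'(4)\approx +0.10,
\end{equation*}
so the likelihood ratio increases, then decreases, then increases: neither $X\geq_{LR}Y$ nor $X\leq_{LR}Y$ holds. (At $x=2$ one has $u\approx e^{-0.2}\approx 1$ and $wx\approx 4$ --- precisely the regime $\lambda e^{-\phi}\approx 1$ with $wx$ moderately large that you predicted would defeat term-by-term domination.) So your proposal is an incomplete proof of the stated theorem, but the step at which it stalls is a step at which any proof must stall: the bare orderings (i)--(iii) suffice for the weaker usual stochastic order, since $\bar G(x)=\exp[-\phi-\lambda(1-e^{-\phi})]$ is pointwise ordered in each parameter, but the likelihood ratio conclusion requires additional restrictions on the parameters, and the paper's one-line ``simple calculations'' glosses over exactly the failure you diagnosed.
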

\begin{proof}
  $X \geq_{LR} Y$ if and only if, $g_{X_{CLFRD}}(x)/g_{Y_{CLFRD}}(x)$ is increasing in x. Now,
  \begin{multline*}
    \frac{g_{X_{CLFRD}}(x)}{g_{Y_{CLFRD}}(x)} = \\
    \frac{(\alpha_{1}+\beta_{1}x)\left(1+\lambda_{1}\exp\left[-\alpha_{1}x-\frac{\beta_{1}}{2}x^{2}\right]\right)\exp\left[-\lambda_{1}\left(1-\exp\left(-\alpha_{1}x-\frac{\beta_{1}}{2}x^{2}\right)\right)\right]}{(\alpha_{2}+\beta_{2}x)\left(1+\lambda_{2}\exp\left[-\alpha_{2}x-\frac{\beta_{2}}{2}x^{2}\right]\right)\exp\left[-\lambda_{2}\left(1-\exp\left(-\alpha_{2}x-\frac{\beta_{2}}{2}x^{2}\right)\right)\right]} \\
    \times \exp\left[(\alpha_{2}-\alpha_{1})x+\frac{\beta_{2}-\beta_{1}}{2}x^2\right].
  \end{multline*}
  By simple calculations it can be shown that $\frac{g_{X_{CLFRD}}(x)}{g_{Y_{CLFRD}}(x)}$ is increasing in $x$ if ($i$) $\alpha_{1} \leq \alpha_{2}$, ($ii$) $\beta_{1} \leq \beta_{2}$,
  and ($iii$) $\lambda_{1} \leq \lambda_{2}$.
\end{proof}

\begin{remark}
  From the relationship among the various stochastic orders and the LR order, we have the following.
  $X\geq_{*}(\leq_{*}) Y$ provided
  \begin{itemize}
    \item[(i)] $\alpha_{1} \leq(\geq) \alpha_{2}$;
    \item[(ii)] $\beta_{1} \leq(\geq) \beta_{2}$;
    \item[(iii)] $\lambda_{1} \leq(\geq) \lambda_{2}$.
  \end{itemize}
  where $*$ stands for `usual stochastic order', `hazard rate order', `reversed hazard rate order',`convex order', `concave order', `mean residual life order', `harmonic average mean residual life order' and so on \cite{Shaked2007}.
\end{remark}
\subsection{Order Statistics}

Let $U_1, U_2, \cdots, U_n$ be a random sample from CLFRD$(\alpha, \beta, \lambda)$ having order statistics $U_{(1)}, U_{(2)}, \cdots, U_{(n)}$

The PDF of $U_{k:n}$ for the order statistics is given by:
\begin{multline*}
  g_{k:n}(x) = \frac{n!}{(k-1)! (n-k)!} \\
  \times \frac{(\alpha + \beta x)[1 + \lambda \exp(-\alpha x - \frac{\beta}{2} x^2)]}{\exp\left[(n-k+1)\left(\lambda + \alpha x + \frac{\beta}{2} x^2 - \lambda \exp\left(-\alpha x - \frac{\beta}{2} x^2\right)\right)\right]} \\
  = \sum_{j=0}^{k-1}  \binom{n-1}{k-1} \binom{k-1}{j} \frac{(-1)^j n(\alpha + \beta x)}{\exp(n+j+1-k)} \\
  \times \frac{\left(1 + \lambda \exp\left(-\alpha x - \frac{\beta}{2} x^2\right)\right)}{\exp\left(\lambda + \alpha x + \frac{\beta}{2} x^2 - \lambda \exp\left(-\alpha x - \frac{\beta}{2} x^2\right)\right)} .
\end{multline*}

The PDF of the smallest order statistic is:
\begin{equation*}
  g_{1:n}(x) = \frac{n(\alpha + \beta x) \left[ 1 + \lambda \exp\left(-\alpha x - \frac{\beta}{2} x^2\right)\right]}{\exp\left[n \left( \lambda + \alpha x + \frac{\beta}{2} x^2 - \lambda \exp\left(-\alpha x - \frac{\beta}{2} x^2\right)\right)\right]}.
\end{equation*}

The PDF of the largest order statistic is:
\begin{equation*}
  g_{n:n}(x) = \sum_{j=0}^{n-1} (-1)^j n(\alpha + \beta x) (1 + \lambda \exp(-y)) \exp[-(j+1) ( \lambda + y - \lambda \exp(-y) )].
\end{equation*}

\section{Estimation}\label{sec_est}

Suppose that $X_{1}, X_{2}, ..., X_{n}$ is a random sample of size $n$ from CLFRD with parameters $\alpha, \beta, \lambda$. Based on (\ref{eq1}), the log-likelihood function for the parameter vector $\Theta=(\alpha, \beta, \lambda)^{T}$ can be expressed as follows:
\begin{multline}
  \mathcal{L}(\Theta) = - n \lambda - \alpha \sum_{i=1}^{n} x_i - \frac{1}{2} \beta \sum_{i=1}^{n} x_i^2 + \lambda \sum_{i=1}^{n} \exp\Biggl(-\alpha x_i - \frac{\beta}{2} x_i^2\Biggr) \\
  + \sum_{i=1}^{n} \log(\alpha + \beta x_i) + \sum_{i=1}^{n} \log\Biggl[1 + \lambda \exp\Biggl(-\alpha x_i - \frac{\beta}{2} x_i^2\Biggr)\Biggr].
\end{multline}

The following are the normal equations obtained by setting the partial derivatives of the log-likelihood function with respect to the parameters $\alpha, \beta, \lambda$ equal to zero:
\begin{align*}
  \frac{\partial \mathcal{L}}{\partial \alpha}  & =
  \begin{aligned}[t]
     & \sum_{i=1}^n x_i + \lambda \sum_{i=1}^n x_i \exp\left(-\alpha x_i - \frac{\beta}{2} x_i^2\right) - \sum_{i=1}^n \frac{1}{\alpha + \beta x_i}                   \\
     & \quad - \lambda \sum_{i=1}^n \frac{x_i \exp\left(-\alpha x_i - \frac{\beta}{2} x_i^2\right)}{1 + \lambda \exp\left(-\alpha x_i - \frac{\beta}{2} x_i^2\right)}
  \end{aligned}                                 \\
  \frac{\partial \mathcal{L}}{\partial \beta}   & =
  \begin{aligned}[t]
     & -\frac{\lambda}{2} \sum_{i=1}^n x_i^2 + \frac{1}{2} \sum_{i=1}^n x_i^2 \exp\left(-\alpha x_i - \frac{\beta}{2} x_i^2\right) - \sum_{i=1}^n \frac{x_i}{\alpha + \beta x_i}  \\
     & \quad + \frac{\lambda}{2} \sum_{i=1}^n \frac{x_i^2 \exp\left(-\alpha x_i - \frac{\beta}{2} x_i^2\right)}{1 + \lambda \exp\left(-\alpha x_i - \frac{\beta}{2} x_i^2\right)}
  \end{aligned}                     \\
  \frac{\partial \mathcal{L}}{\partial \lambda} & = -n + \sum_{i=1}^n \frac{\exp\left(-\alpha x_i - \frac{\beta}{2} x_i^2\right)}{1 + \lambda \exp\left(-\alpha x_i - \frac{\beta}{2} x_i^2\right)}
\end{align*}

By setting the above partial derivatives to zero and solving them simultaneously, we can obtain the MLEs of the three parameters for the CLFRD with $\alpha$, $\beta$, $\lambda$. Since the closed form solution for these equations are not available, we have used an iterative algorithm to solve these equations numerically.

We further derive the asymptotic distributions for the CLFRD. Then we compute the covariance matrix $I^{-1}$, which is the inverse of the observed information matrix, and is obtained by evaluating the negative second derivatives of the log-likelihood function w.r.t. $\alpha, \beta, \lambda$. Since it is difficult to find an analytical solution, numerical methods can be used to approximate the covariance matrix.

\begin{align*}
  I^{-1}
   & =
  \begin{pmatrix}
    - \Biggl. \dfrac{\partial^2 \mathcal{L}}{\partial \alpha^2} \Biggr|_{\hat{\alpha}, \hat{\beta}, \hat{\lambda}}                &
    - \Biggl. \dfrac{\partial^2 \mathcal{L}}{\partial \alpha \partial \beta} \Biggr|_{\hat{\alpha}, \hat{\beta}, \hat{\lambda}}   &
    - \Biggl. \dfrac{\partial^2 \mathcal{L}}{\partial \alpha \partial \lambda} \Biggr|_{\hat{\alpha}, \hat{\beta}, \hat{\lambda}}   \\
    - \Biggl. \dfrac{\partial^2 \mathcal{L}}{\partial \alpha \partial \beta} \Biggr|_{\hat{\alpha}, \hat{\beta}, \hat{\lambda}}   &
    - \Biggl. \dfrac{\partial^2 \mathcal{L}}{\partial \beta^2} \Biggr|_{\hat{\alpha}, \hat{\beta}, \hat{\lambda}}                 &
    - \Biggl. \dfrac{\partial^2 \mathcal{L}}{\partial \beta \partial \lambda} \Biggr|_{\hat{\alpha}, \hat{\beta}, \hat{\lambda}}    \\
    - \Biggl. \dfrac{\partial^2 \mathcal{L}}{\partial \alpha \partial \lambda} \Biggr|_{\hat{\alpha}, \hat{\beta}, \hat{\lambda}} &
    - \Biggl. \dfrac{\partial^2 \mathcal{L}}{\partial \beta \partial \lambda} \Biggr|_{\hat{\alpha}, \hat{\beta}, \hat{\lambda}}  &
    - \Biggl. \dfrac{\partial^2 \mathcal{L}}{\partial \lambda^2} \Biggr|_{\hat{\alpha}, \hat{\beta}, \hat{\lambda}}
  \end{pmatrix}^{-1} \\
   & =
  \begin{pmatrix}
    \mathrm{Var}(\hat{\alpha})                & \mathrm{Cov}(\hat{\alpha}, \hat{\beta})  & \mathrm{Cov}(\hat{\alpha}, \hat{\lambda}) \\
    \mathrm{Cov}(\hat{\alpha}, \hat{\beta})   & \mathrm{Var}(\hat{\beta})                & \mathrm{Cov}(\hat{\beta}, \hat{\lambda})  \\
    \mathrm{Cov}(\hat{\alpha}, \hat{\lambda}) & \mathrm{Cov}(\hat{\beta}, \hat{\lambda}) & \mathrm{Var}(\hat{\lambda})
  \end{pmatrix}
  .
\end{align*}

The derivatives in $I^{-1}$ are given as follows:

\begin{align*}
  \frac{\partial^2 \mathcal{L}}{\partial \alpha^2}                & =
  \begin{aligned}[t]
     & \lambda \sum_{i=1}^n x_i^2 \exp\left(-\alpha x_i - \frac{\beta}{2} x_i^2\right) - \sum_{i=1}^n \frac{1}{(\alpha + \beta x_i)^2}                                               \\
     & \quad + \sum_{i=1}^n \frac{\lambda x_i^2 \exp\left(-\alpha x_i - \frac{\beta}{2} x_i^2\right)}{\left(\lambda + \exp\left(-\alpha x_i - \frac{\beta}{2} x_i^2\right)\right)^2}
  \end{aligned}                                                                                                                                                       \\
  \frac{\partial^2 \mathcal{L}}{\partial \alpha \partial \beta}   & =
  \begin{aligned}[t]
     & \frac{\lambda}{2} \sum_{i=1}^n x_i^3 \exp\left(-\alpha x_i - \frac{\beta}{2} x_i^2\right) + \sum_{i=1}^n \frac{x_i}{( \alpha + \beta x_i )^2}                                             \\
     & \quad + \frac{\lambda}{2} \sum_{i=1}^n \frac{x_i^3 \exp\left(-\alpha x_i - \frac{\beta}{2} x_i^2\right)}{\left(1 + \lambda \exp\left(-\alpha x_i - \frac{\beta}{2} x_i^2\right)\right)^2}
  \end{aligned}                                                                                                                                           \\
  \frac{\partial^2 \mathcal{L}}{\partial \alpha \partial \lambda} & = -\sum_{i=1}^n x_i \exp\left(-\alpha x_i - \frac{\beta}{2} x_i^2\right) - \sum_{i=1}^n \frac{x_i \exp\left(-\alpha x_i - \frac{\beta}{2} x_i^2\right)}{\left(1 + \lambda \exp\left(-\alpha x_i - \frac{\beta}{2} x_i^2\right)\right)^2}                             \\
  \frac{\partial^2 \mathcal{L}}{\partial \beta^2}                 & =
  \begin{aligned}[t]
     & \frac{\lambda}{4} \sum_{i=1}^n x_i^4 \exp\left(-\alpha x_i - \frac{\beta}{2} x_i^2\right) - \sum_{i=1}^n x_i^2 (\alpha + \beta x_i)^{-2}                                                  \\
     & \quad + \frac{\lambda}{4} \sum_{i=1}^n \frac{x_i^4 \exp\left(-\alpha x_i - \frac{\beta}{2} x_i^2\right)}{\left(1 + \lambda \exp\left(-\alpha x_i - \frac{\beta}{2} x_i^2\right)\right)^2}
  \end{aligned}                                                                                                                                           \\
  \frac{\partial^2 \mathcal{L}}{\partial \beta \partial \lambda}  & = -\frac{1}{2} \sum_{i=1}^n x_i^2 \exp\left(-\alpha x_i - \frac{\beta}{2} x_i^2\right) - \frac{1}{2} \sum_{i=1}^n \frac{x_i^2 \exp\left(-\alpha x_i - \frac{\beta}{2} x_i^2\right)}{\left(1 + \lambda \exp\left(-\alpha x_i - \frac{\beta}{2} x_i^2\right)\right)^2} \\
  \frac{\partial^2 \mathcal{L}}{\partial \lambda^2}               & = -\sum_{i=1}^n \frac{\exp\left(-2(\alpha x_i + \frac{\beta}{2} x_i^2)\right)}{\left(1 + \lambda \exp\left(-\alpha x_i - \frac{\beta}{2} x_i^2\right)\right)^2}
\end{align*}

The asymptotic distribution of MLEs for $\alpha, \beta, \lambda$ can be written as
\begin{equation*}
  \left(\hat{\theta}-\theta\right) \sim N_{3}\left(0, I^{-1}(\hat{\theta})\right) \mathpunct{,} \quad \text{where} \quad \hat{\theta}=\left(\hat{\alpha}, \hat{\beta}, \hat{\lambda}\right)
\end{equation*}

Then we can find the $100(1-p)\%$ confidence intervals for $\alpha, \beta, \lambda$, given by $\hat{\alpha} \pm z_{p/2}\sqrt{Var(\hat{\alpha})}$, $\hat{\beta} \pm z_{p/2}\sqrt{Var(\hat{\beta})}$ and $\hat{\lambda} \pm z_{p/2}\sqrt{Var(\hat{\lambda})}$. Here, $z_{p/2}$ is the upper $100 p$-th percentile of the standard normal distribution.

\section{Simulation Study}\label{sec_sim}

This section presents a simulation study to check the characteristics of ML estimators in the CLFRD. This suggests that the MLE approach provides reliable estimates of the CLFRD. The simulation was repeated 5000 times for each configuration.
We considered eight different parameter sets for $\alpha, \beta, \lambda$, such as $(2.0, 2.0, 2.0)$, $(2.0, 2.0, 0.5)$, $(2.0, 0.5, 2.0)$, $(2.0, 0.5, 0.5)$, $(0.5, 2.0, 2.0)$, $(0.5, 2.0, 0.5)$, $(0.5, 0.5, 2.0)$, and $(0.5, 0.5, 0.5)$, and created datasets with sample sizes $n = 100, 200,$ and $300$, from CLFRD using \eqref{eqn1}. \autoref{table_sim1} to \autoref{table_sim8}
report the mean and the standard deviation (SD) of the estimates for the parameters $\alpha, \beta, \lambda$, the bias, the mean squared errors (MSE), and the 95\% confidence interval for each set.
The confidence interval is represented by its lower bound (Low), upper bound (Up), and confidence interval width (CIW) (CIW = Up - Low). Each set of results is organized in a tabular format corresponding to the respective parameter configuration and sample size.

\begin{table}[htbp]
  \centering
  \caption{Parameter Set 1  ($\alpha = 2.0, \beta = 2.0, \lambda = 2.0$)}
  \label{table_sim1}
  \vspace{1em}
  \begin{tabular}{
      S[table-format=3.0]  
      c 
      S[table-format=1.2, round-mode=places, round-precision=2]  
      S[table-format=-1.2, round-mode=places, round-precision=2]  
      S[table-format=2.3, round-mode=places, round-precision=3]  
      S[table-format=3.3, round-mode=places, round-precision=3]  
      S[table-format=-2.2, round-mode=places, round-precision=2]  
      S[table-format=2.2, round-mode=places, round-precision=2]  
      S[table-format=2.2, round-mode=places, round-precision=2]  
      @{\hspace{0.5em}}
    }
    \toprule
    {$n$} &           & {MLE}  & {Bias}  & {SD}    & {MSE}      & {Low}   & {Up}   & {CIW}  \\
    \specialrule{0.1pt}{4pt}{4pt}
    100   & $\alpha$  & 2.6488 & 0.6488  & 1.1007  & 1.632244   & 0.491   & 4.806  & 4.315  \\
          & $\beta$   & 2.4092 & 0.4092  & 2.0918  & 4.542053   & -1.691  & 6.509  & 8.200  \\
          & $\lambda$ & 4.0689 & 2.0689  & 18.2388 & 336.866815 & -31.678 & 39.816 & 71.494 \\
    \specialrule{0.1pt}{4pt}{4pt}
    200   & $\alpha$  & 2.5469 & 0.5469  & 1.0421  & 1.384919   & 0.504   & 4.589  & 4.085  \\
          & $\beta$   & 2.0675 & 0.0675  & 1.5071  & 2.275373   & -0.886  & 5.021  & 5.907  \\
          & $\lambda$ & 2.1451 & 0.1451  & 5.8166  & 33.847290  & -9.255  & 13.545 & 22.800 \\
    \specialrule{0.1pt}{4pt}{4pt}
    300   & $\alpha$  & 2.4934 & 0.4934  & 1.0252  & 1.294325   & 0.484   & 4.503  & 4.019  \\
          & $\beta$   & 1.9348 & -0.0652 & 1.2478  & 1.560853   & -0.511  & 4.380  & 4.891  \\
          & $\lambda$ & 1.9618 & -0.0382 & 1.9201  & 3.687378   & -1.801  & 5.725  & 7.526  \\
    \bottomrule
  \end{tabular}
\end{table}

\begin{table}[htbp]
  \centering
  \caption{Parameter Set 2 ($\alpha = 2.0, \beta = 2.0, \lambda = 0.5$)}
  \label{table_sim2}
  \vspace{1em}
  \begin{tabular}{
      S[table-format=3.0]  
      c 
      S[table-format=1.2, round-mode=places, round-precision=2]  
      S[table-format=-1.2, round-mode=places, round-precision=2]  
      S[table-format=1.3, round-mode=places, round-precision=3]  
      S[table-format=1.3, round-mode=places, round-precision=3]  
      S[table-format=-1.2, round-mode=places, round-precision=2]  
      S[table-format=1.2, round-mode=places, round-precision=2]  
      S[table-format=1.2, round-mode=places, round-precision=2]  
      @{\hspace{0.5em}}
    }
    \toprule
    {$n$} &           & {MLE}  & {Bias}  & {SD}   & {MSE}    & {Low}  & {Up}  & {CIW} \\
    \specialrule{0.1pt}{4pt}{4pt}
    100   & $\alpha$  & 1.8218 & -0.1782 & 0.6502 & 0.454429 & 0.547  & 3.096 & 2.549 \\
          & $\beta$   & 2.1688 & 0.1688  & 1.1319 & 1.309400 & -0.050 & 4.387 & 4.437 \\
          & $\lambda$ & 0.8319 & 0.3319  & 0.8528 & 0.837270 & -0.840 & 2.503 & 3.343 \\
    \specialrule{0.1pt}{4pt}{4pt}
    200   & $\alpha$  & 1.8629 & -0.1371 & 0.6328 & 0.419189 & 0.623  & 3.103 & 2.480 \\
          & $\beta$   & 1.9898 & -0.0102 & 0.7820 & 0.611504 & 0.457  & 3.523 & 3.066 \\
          & $\lambda$ & 0.8073 & 0.3073  & 0.8589 & 0.831908 & -0.876 & 2.491 & 3.367 \\
    \specialrule{0.1pt}{4pt}{4pt}
    300   & $\alpha$  & 1.8841 & -0.1159 & 0.6206 & 0.398557 & 0.668  & 3.101 & 2.433 \\
          & $\beta$   & 1.9464 & -0.0536 & 0.6614 & 0.440174 & 0.650  & 3.243 & 2.593 \\
          & $\lambda$ & 0.7808 & 0.2808  & 0.8374 & 0.779934 & -0.860 & 2.422 & 3.282 \\
    \bottomrule
  \end{tabular}
\end{table}

\begin{table}[htbp]
  \centering
  \caption{Parameter Set 3 ($\alpha = 2.0, \beta = 0.5, \lambda = 2.0$)}
  \label{table_sim3}
  \vspace{1em}
  \begin{tabular}{
      S[table-format=3.0]  
      c 
      S[table-format=1.2, round-mode=places, round-precision=2]  
      S[table-format=-1.2, round-mode=places, round-precision=2]  
      S[table-format=1.3, round-mode=places, round-precision=3]  
      S[table-format=2.3, round-mode=places, round-precision=3]  
      S[table-format=-1.2, round-mode=places, round-precision=2]  
      S[table-format=2.2, round-mode=places, round-precision=2]  
      S[table-format=2.2, round-mode=places, round-precision=2]  
      @{\hspace{0.5em}}
    }
    \toprule
    {$n$} &           & {MLE}  & {Bias}  & {SD}   & {MSE}     & {Low}  & {Up}   & {CIW}  \\
    \specialrule{0.1pt}{4pt}{4pt}
    100   & $\alpha$  & 2.4055 & 0.4055  & 0.8435 & 0.875812  & 0.752  & 4.059  & 3.307  \\
          & $\beta$   & 1.0513 & 0.5513  & 1.4241 & 2.331532  & -1.740 & 3.842  & 5.582  \\
          & $\lambda$ & 1.9741 & -0.0259 & 5.4596 & 29.802053 & -8.727 & 12.675 & 21.402 \\
    \specialrule{0.1pt}{4pt}{4pt}
    200   & $\alpha$  & 2.3065 & 0.3065  & 0.7917 & 0.720581  & 0.755  & 3.858  & 3.103  \\
          & $\beta$   & 0.6725 & 0.1725  & 0.8297 & 0.718032  & -0.954 & 2.299  & 3.253  \\
          & $\lambda$ & 1.9179 & -0.0821 & 1.1986 & 1.443081  & -0.431 & 4.267  & 4.698  \\
    \specialrule{0.1pt}{4pt}{4pt}
    300   & $\alpha$  & 2.2574 & 0.2574  & 0.7560 & 0.637628  & 0.776  & 3.739  & 2.963  \\
          & $\beta$   & 0.5618 & 0.0618  & 0.6381 & 0.410952  & -0.689 & 1.813  & 2.502  \\
          & $\lambda$ & 1.9771 & -0.0229 & 1.1946 & 1.427423  & -0.364 & 4.319  & 4.683  \\
    \bottomrule
  \end{tabular}
\end{table}

\begin{table}[htbp]
  \centering
  \caption{Parameter Set 4 ($\alpha = 2.0, \beta = 0.5, \lambda = 0.5$)}
  \label{table_sim4}
  \vspace{1em}
  \begin{tabular}{
      S[table-format=3.0]  
      c 
      S[table-format=1.2, round-mode=places, round-precision=2]  
      S[table-format=-1.2, round-mode=places, round-precision=2]  
      S[table-format=1.3, round-mode=places, round-precision=3]  
      S[table-format=1.3, round-mode=places, round-precision=3]  
      S[table-format=-1.2, round-mode=places, round-precision=2]  
      S[table-format=1.2, round-mode=places, round-precision=2]  
      S[table-format=1.2, round-mode=places, round-precision=2]  
      @{\hspace{0.5em}}
    }
    \toprule
    {$n$} &           & {MLE}  & {Bias}  & {SD}   & {MSE}    & {Low}  & {Up}  & {CIW} \\
    \specialrule{0.1pt}{4pt}{4pt}
    100   & $\alpha$  & 1.8067 & -0.1933 & 0.5332 & 0.321601 & 0.762  & 2.852 & 2.090 \\
          & $\beta$   & 0.7655 & 0.2655  & 0.6557 & 0.500328 & -0.520 & 2.051 & 2.571 \\
          & $\lambda$ & 0.7634 & 0.2634  & 0.6501 & 0.491874 & -0.511 & 2.037 & 2.548 \\
    \specialrule{0.1pt}{4pt}{4pt}
    200   & $\alpha$  & 1.8653 & -0.1347 & 0.5189 & 0.287309 & 0.848  & 2.882 & 2.034 \\
          & $\beta$   & 0.6204 & 0.1204  & 0.4613 & 0.227306 & -0.284 & 1.525 & 1.809 \\
          & $\lambda$ & 0.7101 & 0.2101  & 0.5623 & 0.360279 & -0.392 & 1.812 & 2.204 \\
    \specialrule{0.1pt}{4pt}{4pt}
    300   & $\alpha$  & 1.8918 & -0.1082 & 0.5149 & 0.276805 & 0.883  & 2.901 & 2.018 \\
          & $\beta$   & 0.5754 & 0.0754  & 0.4058 & 0.170291 & -0.220 & 1.371 & 1.591 \\
          & $\lambda$ & 0.6828 & 0.1828  & 0.5303 & 0.314588 & -0.357 & 1.722 & 2.079 \\
    \bottomrule
  \end{tabular}
\end{table}

\begin{table}[htbp]
  \centering
  \caption{Parameter Set 5 ($\alpha = 0.5, \beta = 2.0, \lambda = 2.0$)}
  \label{table_sim5}
  \vspace{1em}
  \begin{tabular}{
      S[table-format=3.0]  
      c 
      S[table-format=1.2, round-mode=places, round-precision=2]  
      S[table-format=-1.2, round-mode=places, round-precision=2]  
      S[table-format=2.3, round-mode=places, round-precision=3]  
      S[table-format=3.3, round-mode=places, round-precision=3]  
      S[table-format=-2.2, round-mode=places, round-precision=2]  
      S[table-format=2.2, round-mode=places, round-precision=2]  
      S[table-format=2.2, round-mode=places, round-precision=2]  
      @{\hspace{0.5em}}
    }
    \toprule
    {$n$} &           & {MLE}  & {Bias} & {SD}    & {MSE}      & {Low}   & {Up}   & {CIW}  \\
    \specialrule{0.1pt}{4pt}{4pt}
    100   & $\alpha$  & 0.8145 & 0.3145 & 0.5289  & 0.378608   & -0.222  & 1.851  & 2.073  \\
          & $\beta$   & 2.4510 & 0.4510 & 1.1566  & 1.540837   & 0.184   & 4.718  & 4.534  \\
          & $\lambda$ & 2.9627 & 0.9627 & 11.3550 & 129.838123 & -19.293 & 25.218 & 44.511 \\
    \specialrule{0.1pt}{4pt}{4pt}
    200   & $\alpha$  & 0.7181 & 0.2181 & 0.4453  & 0.245863   & -0.155  & 1.591  & 1.746  \\
          & $\beta$   & 2.2906 & 0.2906 & 0.9448  & 0.976911   & 0.439   & 4.142  & 3.703  \\
          & $\lambda$ & 2.0510 & 0.0510 & 3.7559  & 14.106312  & -5.310  & 9.412  & 14.722 \\
    \specialrule{0.1pt}{4pt}{4pt}
    300   & $\alpha$  & 0.6613 & 0.1613 & 0.3817  & 0.171692   & -0.087  & 1.409  & 1.496  \\
          & $\beta$   & 2.2196 & 0.2196 & 0.8507  & 0.771732   & 0.552   & 3.887  & 3.335  \\
          & $\lambda$ & 2.0132 & 0.0132 & 1.9913  & 3.964674   & -1.890  & 5.916  & 7.806  \\
    \bottomrule
  \end{tabular}
\end{table}

\begin{table}[htbp]
  \centering
  \caption{Parameter Set 6 ($\alpha = 0.5, \beta = 2.0, \lambda = 0.5$)}
  \label{table_sim6}
  \vspace{1em}
  \begin{tabular}{
      S[table-format=3.0]  
      c 
      S[table-format=1.2, round-mode=places, round-precision=2]  
      S[table-format=-1.2, round-mode=places, round-precision=2]  
      S[table-format=1.3, round-mode=places, round-precision=3]  
      S[table-format=1.3, round-mode=places, round-precision=3]  
      S[table-format=-1.2, round-mode=places, round-precision=2]  
      S[table-format=1.2, round-mode=places, round-precision=2]  
      S[table-format=1.2, round-mode=places, round-precision=2]  
      @{\hspace{0.5em}}
    }
    \toprule
    {$n$} &           & {MLE}  & {Bias}  & {SD}   & {MSE}    & {Low}  & {Up}  & {CIW} \\
    \specialrule{0.1pt}{4pt}{4pt}
    100   & $\alpha$  & 0.5499 & 0.0499  & 0.3011 & 0.093114 & -0.040 & 1.140 & 1.180 \\
          & $\beta$   & 1.9479 & -0.0521 & 0.6273 & 0.396203 & 0.718  & 3.177 & 2.459 \\
          & $\lambda$ & 0.7059 & 0.2059  & 1.0606 & 1.167055 & -1.373 & 2.785 & 4.158 \\
    \specialrule{0.1pt}{4pt}{4pt}
    200   & $\alpha$  & 0.5522 & 0.0522  & 0.2616 & 0.071127 & 0.039  & 1.065 & 1.026 \\
          & $\beta$   & 1.9245 & -0.0755 & 0.4717 & 0.228112 & 1.000  & 2.849 & 1.849 \\
          & $\lambda$ & 0.6611 & 0.1611  & 0.9638 & 0.954756 & -1.228 & 2.550 & 3.778 \\
    \specialrule{0.1pt}{4pt}{4pt}
    300   & $\alpha$  & 0.5514 & 0.0514  & 0.2374 & 0.059001 & 0.086  & 1.017 & 0.931 \\
          & $\beta$   & 1.9366 & -0.0634 & 0.4015 & 0.165169 & 1.150  & 2.724 & 1.574 \\
          & $\lambda$ & 0.6152 & 0.1152  & 0.8686 & 0.767653 & -1.087 & 2.318 & 3.405 \\
    \bottomrule
  \end{tabular}
\end{table}

\begin{table}[htbp]
  \centering
  \caption{Parameter Set 7 ($\alpha = 0.5, \beta = 0.5, \lambda = 2.0$)}
  \label{table_sim7}
  \vspace{1em}
  \begin{tabular}{
      S[table-format=3.0]  
      c 
      S[table-format=1.2, round-mode=places, round-precision=2]  
      S[table-format=-1.2, round-mode=places, round-precision=2]  
      S[table-format=1.3, round-mode=places, round-precision=3]  
      S[table-format=1.3, round-mode=places, round-precision=3]  
      S[table-format=-1.2, round-mode=places, round-precision=2]  
      S[table-format=1.2, round-mode=places, round-precision=2]  
      S[table-format=2.2, round-mode=places, round-precision=2]  
      @{\hspace{0.5em}}
    }
    \toprule
    {$n$} &           & {MLE}  & {Bias}  & {SD}   & {MSE}    & {Low}  & {Up}  & {CIW} \\
    \specialrule{0.1pt}{4pt}{4pt}
    100   & $\alpha$  & 0.7690 & 0.2690  & 0.3768 & 0.214289 & 0.030  & 1.507 & 1.477 \\
          & $\beta$   & 0.6194 & 0.1194  & 0.3667 & 0.148674 & -0.099 & 1.338 & 1.437 \\
          & $\lambda$ & 1.5414 & -0.4586 & 1.6865 & 3.053987 & -1.764 & 4.847 & 6.611 \\
    \specialrule{0.1pt}{4pt}{4pt}
    200   & $\alpha$  & 0.7044 & 0.2044  & 0.3626 & 0.173204 & -0.006 & 1.415 & 1.421 \\
          & $\beta$   & 0.5431 & 0.0431  & 0.2566 & 0.067677 & 0.040  & 1.046 & 1.006 \\
          & $\lambda$ & 1.8215 & -0.1785 & 1.7179 & 2.982594 & -1.546 & 5.189 & 6.735 \\
    \specialrule{0.1pt}{4pt}{4pt}
    300   & $\alpha$  & 0.6645 & 0.1645  & 0.3417 & 0.143774 & -0.005 & 1.334 & 1.339 \\
          & $\beta$   & 0.5227 & 0.0227  & 0.2213 & 0.049486 & 0.089  & 0.956 & 0.867 \\
          & $\lambda$ & 1.9506 & -0.0494 & 1.7405 & 3.031134 & -1.461 & 5.362 & 6.823 \\
    \bottomrule
  \end{tabular}
\end{table}

\begin{table}[htbp]
  \centering
  \caption{Parameter Set 8 ($\alpha = 0.5, \beta = 0.5, \lambda = 0.5$)}
  \label{table_sim8}
  \vspace{1em}
  \begin{tabular}{
      S[table-format=3.0]  
      c  
      S[table-format=1.2, round-mode=places, round-precision=2]  
      S[table-format=-1.2, round-mode=places, round-precision=2]  
      S[table-format=1.3, round-mode=places, round-precision=3]  
      S[table-format=1.3, round-mode=places, round-precision=3]  
      S[table-format=-1.2, round-mode=places, round-precision=2]  
      S[table-format=1.2, round-mode=places, round-precision=2]  
      S[table-format=1.2, round-mode=places, round-precision=2]  
      @{\hspace{0.5em}}
    }
    \toprule
    {$n$} &           & {MLE}  & {Bias}  & {SD}   & {MSE}    & {Low}  & {Up}  & {CIW} \\
    \specialrule{0.1pt}{4pt}{4pt}
    100   & $\alpha$  & 0.5121 & 0.0121  & 0.2283 & 0.052261 & 0.065  & 0.960 & 0.895 \\
          & $\beta$   & 0.4825 & -0.0175 & 0.1878 & 0.035557 & 0.115  & 0.851 & 0.736 \\
          & $\lambda$ & 0.7693 & 0.2693  & 1.1003 & 1.282892 & -1.387 & 2.926 & 4.313 \\
    \specialrule{0.1pt}{4pt}{4pt}
    200   & $\alpha$  & 0.5239 & 0.0239  & 0.2136 & 0.046199 & 0.105  & 0.943 & 0.838 \\
          & $\beta$   & 0.4671 & -0.0329 & 0.1343 & 0.019129 & 0.204  & 0.730 & 0.526 \\
          & $\lambda$ & 0.7148 & 0.2148  & 1.0187 & 1.083656 & -1.282 & 2.711 & 3.993 \\
    \specialrule{0.1pt}{4pt}{4pt}
    300   & $\alpha$  & 0.5294 & 0.0294  & 0.2027 & 0.041930 & 0.132  & 0.927 & 0.795 \\
          & $\beta$   & 0.4676 & -0.0324 & 0.1141 & 0.014061 & 0.244  & 0.691 & 0.447 \\
          & $\lambda$ & 0.6688 & 0.1688  & 0.9658 & 0.961113 & -1.224 & 2.562 & 3.786 \\
    \bottomrule
    \addlinespace[2em]
  \end{tabular}
\end{table}

\FloatBarrier

We observed that the bias, SD, MSE and CIW of the estimators $\hat{\alpha}, \hat{\beta}, \hat{\lambda}$, obtained via MLE, tend to decrease as the sample size increases
from $n = 100$ to $n = 300$.

\section{Data Analysis}\label{sec_data}

This section presents the empirical evaluation of the proposed CLFRD model against existing lifetime models using real data. For each dataset, we compare the proposed CLFRD model against well-known models like LFRD, RD, ED and GED \cite{Gupta1999}. For each model, the MLEs and the corresponding negative log-likelihood value $(\mathcal{L})$ were computed. Goodness-of-fit was further evaluated using the Kolmogorov–Smirnov (K-S) test. Anderson-Darling test (AD), and Cramér–von Mises test (CM)  statistics, and the AIC values for the models were also calculated, with the CLFRD model treated as the baseline. We derived the variance-covariance matrix. We also plotted the empirical survival function against the fitted survival functions for visualising the performance of the models.

\subsection{Sample Data 1}

The dataset consists of the marks achieved by 48 students in Mathematics \cite{Gupta2009}. \autoref{table_dataset1} shows the dataset and \autoref{table_mle1} shows the computed MLEs on the dataset.

\begin{table}[H]
  \centering
  \caption{Marks Obtained by 48 Students in Mathematics}
  \label{table_dataset1}
  \vspace{1em}
  \begin{tabular}{*{12}{c}}
    \toprule
    4  & 5  & 6  & 6  & 7  & 7  & 8  & 11 & 12 & 12 & 13 & 14 \\
    14 & 15 & 15 & 15 & 15 & 15 & 18 & 18 & 18 & 19 & 19 & 19 \\
    20 & 21 & 21 & 23 & 23 & 23 & 25 & 27 & 28 & 29 & 31 & 34 \\
    34 & 37 & 39 & 40 & 44 & 50 & 50 & 58 & 60 & 65 & 70 & 86 \\
    \bottomrule
  \end{tabular}
\end{table}

\begin{table}[H]
  \centering
  \caption{Maximum Likelihood Estimates of Model Parameters}
  \label{table_mle1}
  \vspace{1em}
  \begin{tabular}{lp{8cm}}
    \toprule
    Model & MLEs                                                                              \\
    \midrule
    LFRD  & $\alpha = 1.55 \times 10^{-2}$, $\beta = 1.19 \times 10^{-3}$                     \\
    \addlinespace[0.2em]
    RD    & $\sigma = 22.4669$                                                                \\
    \addlinespace[0.2em]
    ED    & $\lambda = 3.86 \times 10^{-2}$                                                   \\
    \addlinespace[0.2em]
    GED   & $\lambda = 6.55 \times 10^{-2}$, $\alpha = 2.5212$                                \\
    \addlinespace[0.2em]
    CLFRD & $\alpha = 6.19 \times 10^{-4}$, $\beta = 1.02 \times 10^{-3}$, $\lambda = 1.7140$ \\
    \bottomrule
  \end{tabular}
\end{table}

As shown in \autoref{table_modelcomp1}, the log-likelihood value of the CLFRD model is higher than that of the other models with the exception of GED. The K-S test also indicates that the CLFRD and GED models are the best fit models and can capture the data's structure well. They also have the highest p-value. The AIC values show that the CLFRD model is very similar to the LFRD, GRD and GED models.

\begin{table}[H]
  \centering
  \caption{Model Comparison Results}
  \label{table_modelcomp1}
  \vspace{1em}
  \begin{tabular}{lcccccc}
    \toprule
    Model & $-2\mathcal{L}$ & K-S    & $p$-value & AIC     & AD     & CM     \\
    \midrule
    LFRD  & 400.32          & 0.1365 & 0.3326    & 402.321 & 0.9246 & 0.1615 \\
    \addlinespace[0.2em]
    RD    & 404.30          & 0.2171 & 0.0216    & 404.309 & 3.0924 & 0.5876 \\
    \addlinespace[0.2em]
    ED    & 408.40          & 0.2042 & 0.0365    & 408.392 & 2.5876 & 0.4469 \\
    \addlinespace[0.2em]
    GED   & 393.62          & 0.0937 & 0.7935    & 395.616 & 0.3521 & 0.0594 \\
    \addlinespace[0.2em]
    CLFRD & 396.10          & 0.1190 & 0.5048    & 400.108 & 0.7048 & 0.1197 \\
    \bottomrule
  \end{tabular}
\end{table}

The variance-covariance matrix for the CLFRD parameters is given as:

\begin{align*}
  I_0^{-1} =
  \begin{bmatrix}
    1.19 \times 10^{-5}  & -5.25 \times 10^{-7} & -7.94 \times 10^{-4} \\
    -5.25 \times 10^{-7} & 2.20 \times 10^{-7}  & -4.09 \times 10^{-4} \\
    -7.94 \times 10^{-4} & -4.09 \times 10^{-4} & 1.2705
  \end{bmatrix}
  .
\end{align*}

\FloatBarrier

\subsection{Sample Data 2}

The dataset consists of the failure times for 36 appliances subjected to an automatic life test \cite{Lawless1982}. We divided each value by 1000 for ease of computation. \autoref{table_dataset2} shows the dataset and \autoref{table_mle2} shows the computed MLEs on the dataset.

\begin{table}[H]
  \centering
  \caption{Failure Time of 36 appliances}
  \label{table_dataset2}
  \vspace{1em}
  \begin{tabular}{*{9}{c}}
    \toprule
    11   & 35   & 49   & 170  & 329  & 381  & 708  & 958  & 1062  \\
    1167 & 1594 & 1925 & 1990 & 2223 & 2327 & 2400 & 2451 & 2471  \\
    2551 & 2565 & 2568 & 2694 & 2702 & 2761 & 2831 & 3034 & 3059  \\
    3112 & 3214 & 3478 & 3504 & 4329 & 6367 & 6976 & 7846 & 13403 \\
    \bottomrule
  \end{tabular}
\end{table}

\begin{table}[H]
  \centering
  \caption{Maximum Likelihood Estimates of Model Parameters}
  \label{table_mle2}
  \vspace{1em}
  \begin{tabular}{lp{8cm}}
    \toprule
    Model & MLEs                                                                              \\
    \midrule
    LFRD  & $\alpha = 0.3254$ , $\beta = 1.47 \times 10^{-2}$                                 \\
    \addlinespace[0.2em]
    RD    & $\sigma = 2.6473$                                                                 \\
    \addlinespace[0.2em]
    ED    & $\lambda = 0.3627$                                                                \\
    \addlinespace[0.2em]
    GED   & $\lambda = 0.3535$, $\alpha = 0.9603$                                             \\
    \addlinespace[0.2em]
    CLFRD & $\alpha = 6.38 \times 10^{-2}$, $\beta = 2.58 \times 10^{-2}$, $\lambda = 2.7986$ \\
    \bottomrule
  \end{tabular}
\end{table}

As shown in \autoref{table_modelcomp2}, the log-likelihood value of the CLFRD model is higher than that of the other models. The base LFRD model is similar to our proposed model. The K-S test also indicates that the CLFRD model is a good fit model and can capture the data's structure well. The AIC values show that the CLFRD model is similar to LFRD.

\begin{table}[H]
  \centering
  \caption{Model Comparison Results}
  \label{table_modelcomp2}
  \vspace{1em}
  \begin{tabular}{lcccccc}
    \toprule
    Model & $-2\mathcal{L}$ & K-S    & $p$-value & AIC     & AD     & CM     \\
    \midrule
    LFRD  & 144.72          & 0.1743 & 0.1993    & 146.719 & 1.3549 & 0.2536 \\
    \addlinespace[0.2em]
    RD    & 182.12          & 0.2841 & 0.0046    & 182.117 & 7.6518 & 0.9272 \\
    \addlinespace[0.2em]
    ED    & 145.02          & 0.1970 & 0.1064    & 145.013 & 1.4970 & 0.2995 \\
    \addlinespace[0.2em]
    GED   & 144.98          & 0.2021 & 0.0914    & 146.977 & 1.5298 & 0.3145 \\
    \addlinespace[0.2em]
    CLFRD & 143.28          & 0.1551 & 0.3181    & 147.285 & 1.2365 & 0.2017 \\
    \bottomrule
  \end{tabular}
\end{table}

The variance-covariance matrix for the CLFRD parameters is given as:

\begin{align*}
  I_0^{-1} =
  \begin{bmatrix}
    2.25 \times 10^{-3}  & 4.27 \times 10^{-4}  & -9.27 \times 10^{-2} \\
    4.27 \times 10^{-4}  & 4.98 \times 10^{-4}  & -3.96 \times 10^{-2} \\
    -9.27 \times 10^{-2} & -3.96 \times 10^{-2} & 5.4571
  \end{bmatrix}
  .
\end{align*}

\FloatBarrier

\subsection{Sample Data 3}

The dataset consists of the lifetime of 50 devices \cite{Aarset1987}. \autoref{table_dataset3} shows the dataset and \autoref{table_mle3} shows the computed MLEs on the dataset.

\begin{table}[H]
  \centering
  \caption{Lifetime of 50 Devices}
  \label{table_dataset3}
  \vspace{1em}
  \begin{tabular}{*{10}{c}}
    \toprule
    0.1 & 0.2 & 1  & 1  & 1  & 1  & 1  & 2  & 3  & 6  \\
    7   & 11  & 12 & 18 & 18 & 18 & 18 & 18 & 21 & 32 \\
    36  & 40  & 45 & 46 & 47 & 50 & 55 & 60 & 63 & 63 \\
    67  & 67  & 67 & 67 & 72 & 75 & 79 & 82 & 82 & 83 \\
    84  & 84  & 84 & 85 & 85 & 85 & 85 & 85 & 86 & 86 \\
    \bottomrule
  \end{tabular}
\end{table}

\begin{table}[!htbp]
  \centering
  \caption{Maximum Likelihood Estimates of Model Parameters}
  \label{table_mle3}
  \vspace{1em}
  \begin{tabular}{lp{9cm}}
    \toprule
    Model & MLEs                                                                                           \\
    \midrule
    LFRD  & $\alpha = 1.36 \times 10^{-2}$, $\beta = 2.40 \times 10^{-4}$                                  \\
    \addlinespace[0.2em]
    RD    & $\sigma = 39.6472$                                                                             \\
    \addlinespace[0.2em]
    ED    & $\lambda = 2.19 \times 10^{-2}$                                                                \\
    \addlinespace[0.2em]
    GED   & $\lambda = 1.87 \times 10^{-2}$, $\alpha = 0.7802$                                             \\
    \addlinespace[0.2em]
    CLFRD & $\alpha = 1.60 \times 10^{-2}$, $\beta = 1.57 \times 10^{-4}$, $\lambda = 5.57 \times 10^{-3}$ \\
    \bottomrule
  \end{tabular}
\end{table}

As shown in \autoref{table_modelcomp3}, the log-likelihood value of the CLFRD model is higher than that of the other models. The base LFRD model is similar to our proposed model. The K-S test also indicates that the CLFRD model is the best fit model and can capture the data's structure well, having the highest p-value. The AIC values show that CLFRD is not worse than ED and RD and is very similar to the base LFRD.

\begin{table}[H]
  \centering
  \caption{Model Comparison Results}
  \label{table_modelcomp3}
  \vspace{1em}
  \begin{tabular}{lcccccc}
    \toprule
    Model & $-2\mathcal{L}$ & K-S    & $p$-value & AIC     & AD      & CM     \\
    \midrule
    LFRD  & 476.12          & 0.1769 & 0.0876    & 478.127 & 4.0346  & 0.4627 \\
    \addlinespace[0.2em]
    RD    & 528.10          & 0.2621 & 0.0021    & 528.106 & 13.3206 & 0.7913 \\
    \addlinespace[0.2em]
    ED    & 482.18          & 0.1913 & 0.0515    & 482.179 & 3.6542  & 0.5199 \\
    \addlinespace[0.2em]
    GED   & 480.00          & 0.2044 & 0.0307    & 481.990 & 3.2585  & 0.5667 \\
    \addlinespace[0.2em]
    CLFRD & 476.84          & 0.1744 & 0.0956    & 480.829 & 3.6818  & 0.4420 \\
    \bottomrule
  \end{tabular}
\end{table}

The variance-covariance matrix for the CLFRD parameters is given as:

\begin{align*}
  I_0^{-1} =
  \begin{bmatrix}
    5.237884 \times 10^{-5}  & -3.919999 \times 10^{-7} & -2.867154 \times 10^{-3} \\
    -3.919999 \times 10^{-7} & 9.795843 \times 10^{-9}  & 8.299924 \times 10^{-6}  \\
    -2.867154 \times 10^{-3} & 8.299924 \times 10^{-6}  & 2.409978 \times 10^{-1}
  \end{bmatrix}
  .
\end{align*}

Finally, the empirical and the fitted survival functions for each model are visually compared in \autoref{fig_surv}. We can see that the CLFRD model closely follows the empirical survival function for all datasets.

\begin{figure}[!htbp]
  \vspace{3em}
  \centering
  \begin{minipage}{0.32\textwidth}
    \centering
    \includegraphics[width=\textwidth]{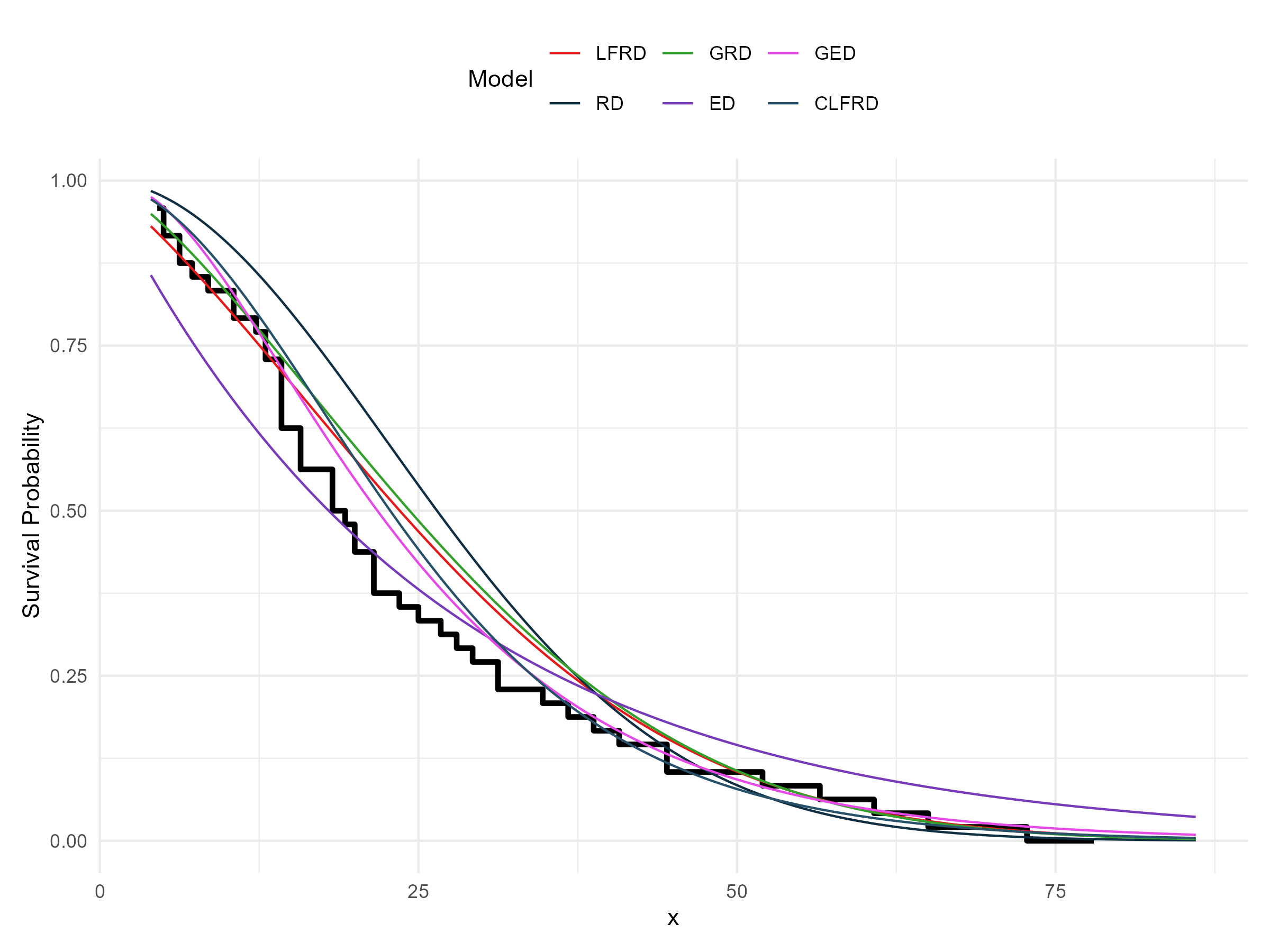}
    {\vspace{0.5ex}\centering\scriptsize Sample Data 1\par}
  \end{minipage}
  \hfill
  \begin{minipage}{0.32\textwidth}
    \centering
    \includegraphics[width=\textwidth]{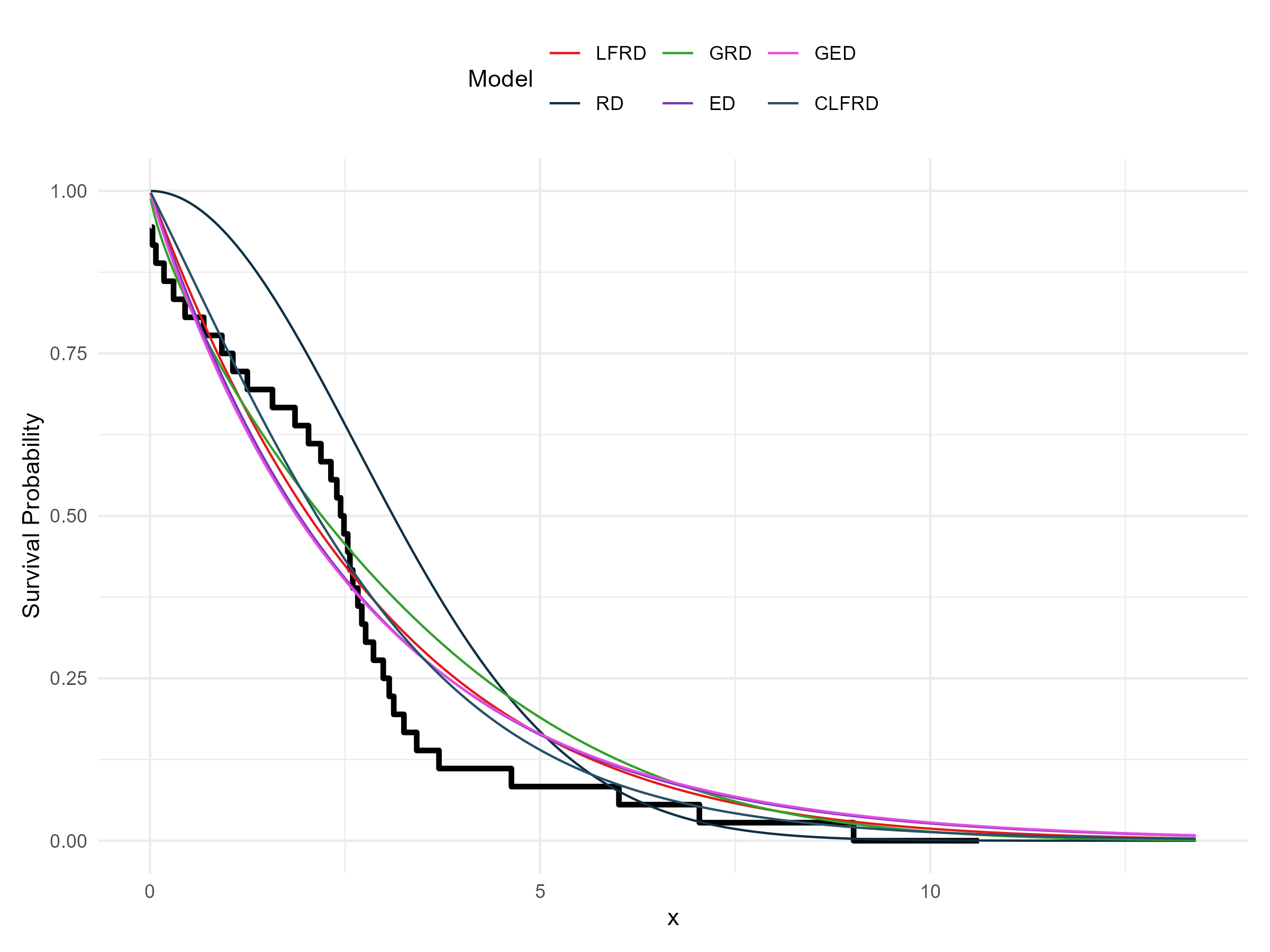}
    {\vspace{0.5ex}\centering\scriptsize Sample Data 2\par}
  \end{minipage}
  \hfill
  \begin{minipage}{0.32\textwidth}
    \centering
    \includegraphics[width=\textwidth]{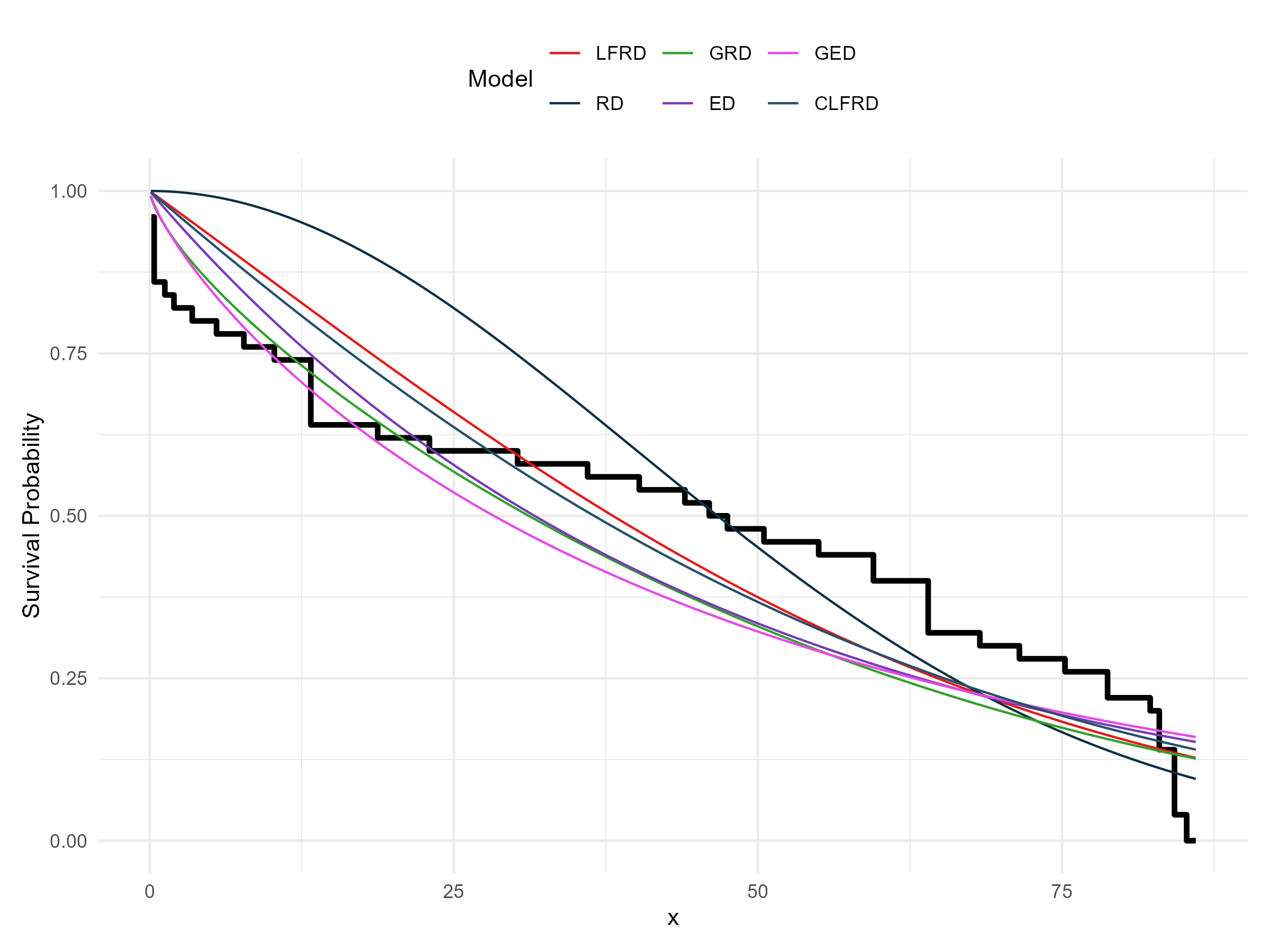}
    {\vspace{0.5ex}\centering\scriptsize Sample Data 3\par}
  \end{minipage}
  \caption{Survival Functions Fitted vs. Empirical}
  \label{fig_surv}
\end{figure}

\FloatBarrier

\section{Conclusion}\label{sec_con}

This paper has presented a comprehensive theoretical, statistical, and computational analysis of the new proposed Compounded Linear Failure Rate Distribution (CLFRD), a three-parameter extension of the classical Linear Failure Rate model. Our study established the CLFRD's mathematical foundations by deriving its key properties, including the probability density function, hazard rate function, reversed hazard rate function, mean residual life, and mean inactivity time, while demonstrating its ability to generalize both the Exponential and the Rayleigh distributions through parameter reduction. We further characterized the model's behavior through moment calculations, quantile analysis, and order statistics. The development of complete likelihood-based inference procedures, including the observed information matrix, enables robust parameter estimation with computable variance-covariance structures. Through a detailed systematic simulation study and comparative empirical analyses, we have validated the CLFRD's superior flexibility in modeling complex failure patterns while maintaining practical implementation feasibility. The model's balanced combination of theoretical generality and computational tractability suggests promising applications in reliability engineering and survival analysis, particularly for systems exhibiting non-standard hazard profiles.

\bibliographystyle{elsarticle-num}
\bibliography{references}

\end{document}